\newcommand{\be}{\begin{equation}}
\newcommand{\ee}{\end{equation}}
\newcommand{\ba}{\begin{array}}
\newcommand{\ea}{\end{array}}
\newcommand{\bea}{\begin{eqnarray}}
\newcommand{\eea}{\end{eqnarray}}
\newcommand{\calC}{{\cal C }}
\newcommand{\calD}{{\cal D }}
\newcommand{\calS}{{\cal S }}
\newcommand{\calG}{{\cal G }}
\newcommand{\ZZ}{\mathbb{Z}}
\newcommand{\la}{\langle}
\newcommand{\ra}{\rangle}
\newcommand{\nn}{\nonumber}
\newtheorem{lemma}{Lemma}
\begin{document}

\title{Subsystem surface codes with three-qubit check operators}

\author{Sergey \surname{Bravyi}}
\affiliation{IBM Watson Research Center, Yorktown Heights NY 10598 (USA)}

\author{Guillaume \surname{Duclos-Cianci}}
\affiliation{D\'epartment de Physique, Universit\'e de Sherbrooke, Sherbrooke, Qu\'ebec, J1K 2R1 (Canada)}

\author{David Poulin}
\affiliation{D\'epartment de Physique, Universit\'e de Sherbrooke, Sherbrooke, Qu\'ebec, J1K 2R1 (Canada)}

\author{Martin \surname{Suchara}}
\affiliation{Computer Science Division, UC  Berkeley, Berkeley CA 94720 (USA) }

\begin{abstract}
We propose a simplified version of the Kitaev's surface code
in which error correction  requires only three-qubit parity measurements
for Pauli operators $XXX$ and $ZZZ$.
The new code belongs to the class
of subsystem stabilizer codes. It inherits many favorable properties of the
standard surface code such as encoding
of multiple logical qubits on a planar lattice with punctured holes,
efficient decoding by either minimum-weight matching or renormalization group methods, and high error threshold.
The new subsystem surface code (SSC)
 gives rise to an exactly solvable Hamiltonian with
$3$-qubit interactions, topologically ordered ground state, and a
constant energy gap.
We construct a local unitary transformation mapping the SSC Hamiltonian to the one of the ordinary surface code  thus showing that the two Hamiltonians
belong to the same topological class.
We describe error correction
protocols for the SSC and
determine its error thresholds under several
natural error models.
In particular, we show that the SSC has error threshold
approximately $0.6\%$ for the standard circuit-based
error model studied in the literature. We also consider a
model in which
three-qubit parity operators can be measured directly. We show that the SSC
has error threshold approximately $0.97\%$  in this setting.
\end{abstract}

\maketitle

\section{Introduction}
Quantum error correcting codes are vital ingredients
in all scalable quantum computing architectures proposed so far.
By actively monitoring and correcting errors,
the encoded quantum states can be protected from noise up to any desired
precision  provided that
the error rate of elementary quantum operations is
below certain constant value known as the
error threshold.

Topological codes such as
the surface code family~\cite{Kitaev03,BK:surface,Dennis01}
have received considerable attention lately due to their
several attractive features. First, the
quantum hardware envisioned in the surface code architecture
consists of a 2D array of qubits with controlled nearest-neighbor interactions
and a local readout.  In principle, it can be implemented
using the Josephson junction qubits
technology~\cite{divincenzo:JJarchitecture}.
Surface codes feature an error threshold of at least  $1\%$~\cite{WFH:highthresh}
which is one of the highest thresholds among all studied codes.
Secondly,  encoded Clifford gates such as the CNOT gate can be implemented efficiently  by the code deformation method
\cite{BMD:codedef, RH:cluster2D,Fowler08}
which requires only a mild overhead in space and time.
The error rate of
encoded gates decreases exponentially with the lattice size~\cite{RHG:threshold}.
Thirdly, the surface codes can be decoded efficiently using
Edmonds's minimum weight matching algorithm~\cite{Dennis01,Fowler12}
or renormalisation group methods~\cite{Poulin09,DP10a1,BH11a}.

Although the surface code is among the best code candidates,
a promising direction for improvements has been recently identified by Bombin~\cite{bombin:topsub}
who proposed topological subsystem codes~\cite{BMD:topo}.
A subsystem code~\cite{Bacon06,Poulin05} can be viewed as a regular stabilizer code in which one or several logical
qubits do not encode any information. The presence of unused logical qubits, known as {\em gauge qubits},
simplifies  eigenvalue  measurements of multi-qubit stabilizers---such
as the plaquette and star operators of the surface code---which are
required for error correction. Consider as an example the simplest $4$-qubit code
with two stabilizers  $S^X=X^{\otimes 4}$ and $S^Z=Z^{\otimes 4}$. It encodes two qubits
with logical Pauli operators $\overline{X}_L=X_1X_2$, $\overline{Z}_L=Z_1 Z_3$
and $\overline{X}_G=X_1X_3$, $\overline{Z}_G=Z_1 Z_2$. If only the first logical qubit
is used to encode information, the syndrome (eigenvalue) of $S^X$ can be determined
indirectly by measuring
eigenvalues of the unused logical operators $\overline{X}_G$ and $S^X \overline{X}_G =X_2 X_4$.
Multiplying the measured eigenvalues together yields the desired eigenvalue of $S^X$.
The syndrome of $S^Z$ is determined similarly by
measuring eigenvalues of the unused logical operators
$\overline{Z}_G$ and $S^Z\overline{Z}_G=Z_3 Z_4$
followed by  multiplication of the outcomes.
 Hence the full syndrome extraction
requires only two-qubit parity measurements and simple classical post-processing. The unused logical operators
that need to be measured in order to extract the syndrome of all stabilizers
are usually referred to as {\em gauge generators}, see~\cite{Poulin05}
for the general theory of subsystem codes.

A  simplified syndrome readout offered by subsystem codes has its own costs. In any practical settings, eigenvalues
of individual gauge generators can only be  measured with a finite accuracy.
As one multiplies together measured  eigenvalues, errors tend to accumulate rendering the inferred
syndrome bit unreliable. This strongly limits the class of candidate subsystem codes
 for the topological fault-tolerant
architecture. First, a suitable code must have local gauge generators, ideally,
$2$- or $3$-qubit Pauli operators acting on nearest-neighbor qubits.
This ensures that  the syndrome readout requires only local measurements.
To avoid accumulation of measurement errors, a suitable code must also have
low-weight stabilizers. More precisely, each stabilizer must be composed of
only a few gauge generators.
 The latter requirement leaves out many interesting families of codes, such as the 2D
 Bacon-Shor codes~\cite{Bacon06}
 and random 2D subsystem codes discovered in~\cite{Bravyi11}.
In contrast, subsystem color codes found in~\cite{bombin:topsub} have $2$-qubit gauge generators while stabilizers
act on either $6$ or $18$ qubits.
These codes were shown to have a constant error
threshold of at least $2\%$ under depolarizing noise assuming noiseless syndrome readout~\cite{Suchara11,Bombin11}.
 Unfortunately, subsystem
color codes do not inherit favorable properties of the surface code such as encoding of
multiple logical qubits~\cite{bombin:topsub} on a planar lattice
required for the code deformation method.

In the present paper we propose a subsystem  version of the standard surface code
on the regular square lattice. Each plaquette of the lattice carries one gauge qubit
and a pair of weight-$6$ stabilizers of type $X^{\otimes 6}$ and $Z^{\otimes 6}$,
see Fig.~\ref{fig:code_lattice_toric}.
The code  has $3$-qubit gauge generators of type $XXX$ and $ZZZ$ which makes it
suitable for architectures where direct $3$-qubit parity measurements in the $X$- and $Z$-basis
are available. A promising proposal for implementing $3$-qubit parity measurements
in Josephson junction qubits has been recently made
by DiVincenzo and Solgun~\cite{DiVincenzo12}.
By analogy with the surface code, the new subsystem surface code (SSC)
can encode multiple logical qubits on a planar lattice
with punctured holes.  We describe error correction
protocols for the SSC and
determine its error thresholds under several
natural error models.
First, we study the so called ``code capacity'' model
where each qubit is subject to independent bit-flip and phase-flip errors
with rate $p$, while syndrome measurements are noiseless.
By relating the optimal error correction to the
phase transition in the  random-bond Ising model on the honeycomb
lattice, we show that the threshold error rate is $p_0\approx 7\%$.
Secondly, we study the  circuit-based error model
where the syndrome readout is simulated by noisy quantum gates,
single-qubit measurements, and ancilla preparations.
Each operation can fail with a probability $p$, see Section~\ref{sec:circuit} for details.
Monte Carlo simulation suggests that the
SSC has error threshold
$p_c \approx 0.6\%$ for the circuit-based error model.
Finally, we consider a model in which
$3$-qubit parity can be measured directly. We show that the SSC
has error threshold approximately $0.97\%$ for this direct parity measurement model.

The new code also gives rise to an exactly solvable Hamiltonian with $3$-qubit
interactions  which has  a topologically ordered ground state
and whose excitations are non-interacting abelian  anyons.
The exact solvability of the model stems from a peculiar commutativity
structure of the model's Hamiltonian. We show that the set of all relevant
$3$-qubit interactions
can be partitioned into small clusters such that interactions from different
clusters pairwise commute.
In contrast, it was shown recently by Aharonov and Eldar~\cite{Aharonov11}
 that topological order cannot
be realized by $3$-local Hamiltonians
in which {\em all} interactions pairwise commute.
We also construct a local unitary transformation $U$ that maps the $6$-qubit stabilizers
of the SSC to the plaquette and star operators of the Kitaev's surface code on the square lattice. The gauge generators $XXX$ and $ZZZ$ are mapped to single-qubit
$X$ and $Z$ operators. Loosely speaking, the map $U$ decouples gauge qubits
from the surface code qubits.

The rest of the paper is organized as follows.
Sections~\ref{sec:STC},\ref{sec:TQO}
introduce a subsystem version of the toric code
with $3$-qubit gauge generators, the corresponding exactly
solvable Hamiltonian and discuss its connection to the
ordinary toric code.
Extension to the planar geometry is given in Section~\ref{sec:SSC}
that defines  subsystem surface codes.
The concept of a virtual lattice
which is crucial for understanding our error correction protocols
is introduced in Section~\ref{sec:ec2D}. This section also discusses error correction
for the idealized settings when syndrome readout is noiseless.
Error correction protocols for the circuit-based
syndrome readout and numerical simulations are presented in Section~\ref{sec:circuit}.
Finally, Section~\ref{sec:toy} focuses on a model in which
direct $3$-qubit parity measurement are available.

\section{A subsystem  toric code }
\label{sec:STC}

We begin by introducing a subsystem version of the toric code that
encodes two logical qubits~\cite{Kitaev03}.
Extension to planar lattices with a boundary will be described in the next section.
The code is defined on the regular square lattice of size $L\times L$
with periodic boundary conditions. It contains $L^2$ vertices,
$2L^2$ edges, and $L^2$ plaquettes. We place a qubit
at every vertex and at the center of every edge of the lattice.
Hence there are in total $n=3L^2$  code qubits.
\begin{figure}[h]
\centerline{\includegraphics[height=7cm]{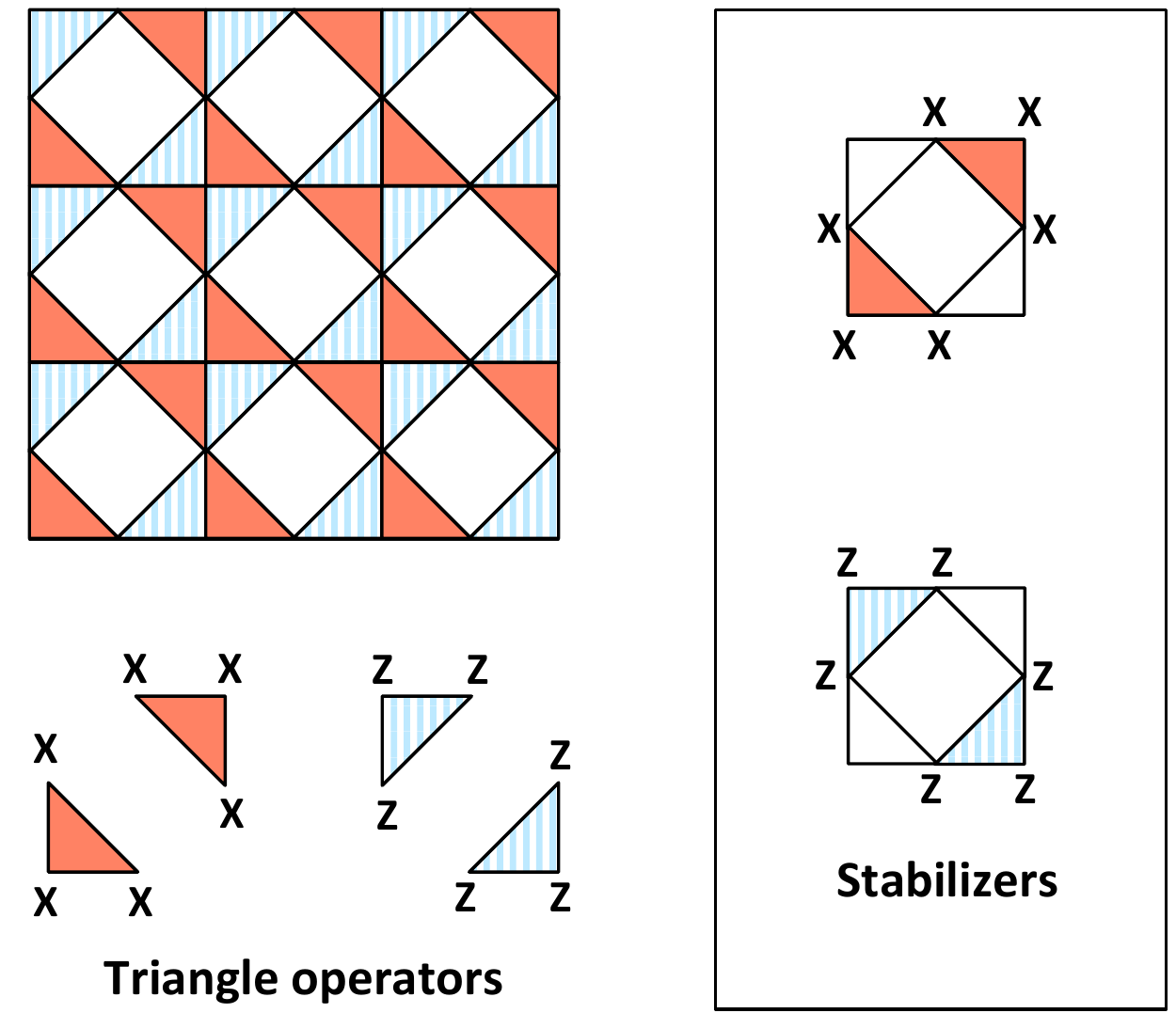}}
\caption{Subsystem  toric code.
Qubits live at vertices and centers of edges of the regular square lattice.
Opposite sides of the lattice are identified.
{\em Left:} Four types of triangles and the corresponding
triangle operators $G(T)$. Triangle operators that belong to
different plaquettes pairwise commute.
 {\em Right:} Stabilizer operators $S^X_p$ (top) and $S^Z_p$ (bottom).
Stabilizers are analogues of the plaquette and star operators of the standard toric code.
Triangle operators commute with stabilizers.
Eigenvalue of any  stabilizer
can be determined by measuring eigenvalues of individual triangle operators.
For a lattice of size $L\times L$ the code has parameters $[[3L^2,2,L]]$.
This should be compared with the standard surface code which has
parameters $[[2L^2,2,L]]$.
}
\label{fig:code_lattice_toric}
\end{figure}
For each plaquette $p$ define weight-$6$ Pauli operators
$S^X_p$ and $S^Z_p$ as shown on Fig.~\ref{fig:code_lattice_toric} (right).
One can easily check that $S^X_p$ and $S^Z_q$ commute with each other
for all $p$ and $q$. Let $\calS$ be the abelian group
generated by $2L^2$ operators  $S^X_p$ and $S^Z_p$.
It defines a quantum stabilizer code with a codespace $\calC$ spanned by
$n$-qubit states $\psi$ invariant under $\calS$, that is, $\psi\in \calC$
iff $S^X_p\cdot \psi =\psi$ and $S^Z_p\cdot \psi =\psi$ for all $p$.
A simple algebra shows that $\prod_p S^X_p=I$ and $\prod_p S^Z_p=I$, where
the product is taken over all plaquettes of the lattice. Furthermore, since
each qubit belongs to exactly two stabilizers $S^X_p$ and two stabilizers $S^Z_p$,
these are the only dependencies among the generators of $\calS$. This shows that
$\calS$ has $s=2(L^2-1)$ independent generators. The standard stabilizer
formalism~\cite{NCbook} then implies that $\calS$ is a stabilizer code encoding
$k'=n-s=L^2+2$ qubits, that is, $\mathrm{dim}(\calC)=2^{k'}$.

We shall now divide the $k'$ encoded qubits
into $g=L^2$ gauge qubits and $k=2$ logical qubits.
Let $u$ be any vertex of the lattice and $f,g$
be a pair of orthogonal edges incident to $u$. The triple $(u,f,g)$
will be referred to as a {\em triangle}.
Note that the lattice has four non-equivalent types of triangles,
see Fig.~\ref{fig:code_lattice_toric}.
We shall say that a triangle $T=(u,f,g)$ is north-west (NW)
if $u$ is at the north-west corner of the plaquette formed by $f$ and $g$.
Similarly one defines north-east (NE), south-west (SW),
and south-east (SE) triangles.
Define {\em triangle operators}
\be
\label{G(T)}
G(T)=\left\{  \ba{rcl} X_u X_f X_g &\mbox{if} & \mbox{$T$ is SW or NE triangle}, \\
Z_u Z_f Z_g&\mbox{if} & \mbox{$T$ is SE or NW triangle}, \\
\ea
\right.
\ee
see Fig.~\ref{fig:code_lattice_toric}. Here subscripts indicate qubits acted upon by
the Pauli operators $X$ and $Z$. Note that triangle operators that belong to different
plaquettes commute with each other.

Any stabilizer can be expressed as a product of two triangle operators
using identities
\bea
S^X_p&=&G(T^{SW}_p) G(T^{NE}_p), \nn \\
S^Z_p&=&G(T^{SE}_p) G(T^{NW}_p),\label{stabilizers}
\eea
see Fig.~\ref{fig:code_lattice_toric}. Here $T^{NW}_p$, $T^{NE}_p$,
$T^{SW}_p$, and $T^{SE}_p$ are triangles of type
NW, NE, SW, SE respectively that belong to a plaquette $p$.

Let us now show that triangle operators commute with all stabilizers, thus
being
logical operators of the code $\calS$.
Consider any stabilizer, say, $S^X_p$ and any triangle operator $G(T)$
of $Z$-type.  If $T$ does not belong to the plaquette $p$ then $S^X_p$ commutes
with $G(T)$ because  triangle operators from different plaquettes always commute.
If $T$ belongs to $p$ then $G(T)$ anti-commutes with both $X$-type triangles
forming $S^X_p$, that is, $G(T)$ commutes with $S^X_p$.
A similar argument shows that $Z$-type stabilizers commute with $X$-type triangle
operators.

The above observations show that we can choose $g=L^2$ pairs of logical
operators for the code $\calS$ as
\be
\label{gauge_qubit}
\overline{X}_p=G(T^{SW}_p) \quad \mbox{and} \quad \overline{Z}_p=G(T^{SE}_p),
\ee
where $p$ runs over all plaquettes of the lattice.
We shall treat encoded qubits defined
by $\overline{X}_p$ and $\overline{Z}_p$ as gauge qubits
encoding no useful information  because the corresponding logical operators
have very small weight.  Hence each plaquette of the lattice carries
one gauge qubit. As we will show in Sec.~\ref{sec:TQO}, it is possible to completely disentangle
these gauge qubits from the code with a depth 4 local quantum circuit, leaving behind the
usual toric code on $2L^2$ qubits and $L^2$ ancillary qubits that are decoupled from the code.

Recall that the code $\calS$ has $k'=L^2+2$ encoded qubits.
This leaves $k=k'-g=2$ logical qubits which have not been identified yet.
 Let $\Gamma$ and $\Lambda$ be the set of all qubits lying on some
fixed horizontal and some fixed vertical line of the lattice respectively, see Fig.~\ref{fig:logicals_toric}.
Note that $|\Gamma|=|\Lambda|=2L$.
 Define
\be
\label{logical1}
\overline{X}_1=\prod_{j\in \Gamma} X_j, \quad \overline{Z}_1=\prod_{j\in \Lambda} Z_j
\ee
and
\be
\label{logical2}
\overline{X}_2=\prod_{j\in \Lambda} X_j, \quad \overline{Z}_2=\prod_{j\in \Gamma} Z_j.
\ee
 \begin{figure}[h]
\centerline{\includegraphics[height=4cm]{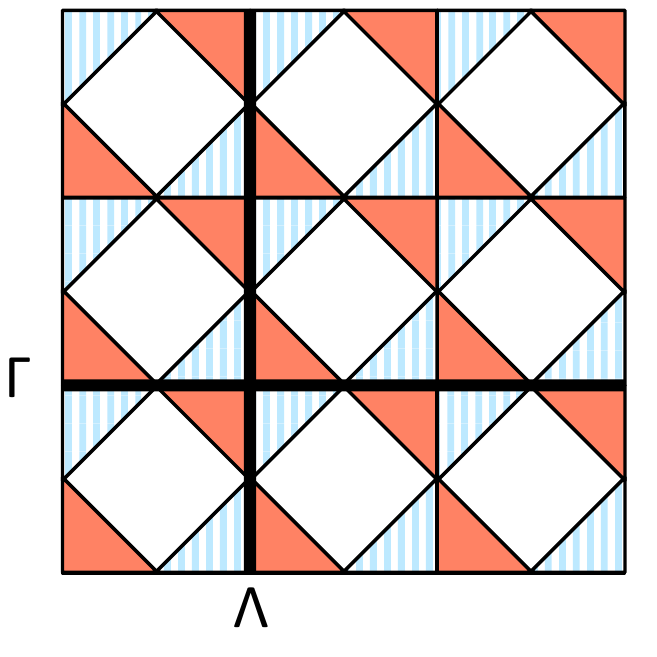}}
\caption{Non-contractible loops on the toric code lattice.
Each loop contains $2L$ qubits.
}
\label{fig:logicals_toric}
\end{figure}

Any triangle operator commutes with $\overline{X}_j$ and $\overline{Z}_j$
because triangles share $0$ or $2$ qubits with $\Gamma$ and $\Lambda$.
In addition, since $\Gamma$ and $\Lambda$ are even sets of qubits
overlapping on exactly one qubit, one has commutation rules
$\overline{X}_i \overline{Z}_j=(-1)^{\delta_{i,j}}
\overline{Z}_j\overline{X}_i$. Hence one can view  $\overline{X}_1,\overline{Z}_1$ and
$\overline{X}_2,\overline{Z}_2$  as logical $X$ and $Z$ operators on the two remaining logical qubits
of the code $\calS$.

The minimum distance $d$ of a subsystem code is defined as the minimum weight
of a Pauli error $E$ that commutes with all stabilizers and implements a non-trivial
transformation on the logical qubits, see~\cite{Poulin05}.
Let us show that the subsystem toric code has distance $d=L$.
Indeed, any error $E$ as above must anti-commute with at least one of the logical operators
$\overline{X}_1,\overline{Z}_1,\overline{X}_2,\overline{Z}_2$.
Assume wlog that $E$ anti-commutes with $\overline{Z}_1$.
Let $\Lambda'$ be any vertical line on the lattice (a horizontal translation of $\Lambda$)
and $\overline{Z}_1'=\prod_{j\in \Lambda'} Z_j$. One can easily check that
$\overline{Z}_1 \overline{Z}_1'$ coincides with the product of stabilizers
$S^Z_p$ over all plaquettes $p$ lying between $\Lambda$ and $\Lambda'$.
Since $E$ commutes with all stabilizers, we conclude that $E$ anti-commutes with
$\overline{Z}_1'$. But this means that  $E$ must act non-trivially on at least one qubit
of $\Lambda'$. Since there are $L$ non-overlapping choices of the line $\Lambda'$, we
conclude that $E$ must have weight at least $L$. One can also easily check
that translating $\Lambda$ by the half of the lattice period gives
a logical operator of weight $L$ equivalent to $\overline{Z}_1$
(modulo gauge operators). Hence the code has distance $d=L$.

The first step in any error correction protocol based on stabilizer codes is
the {\em syndrome readout}, that is, a non-destructive eigenvalue measurement
of every stabilizer operator. To measure the $6$-qubit stabilizers $S^Z_p$ and $S^X_p$
we shall take advantage of the gauge qubits  and the identity Eq.~(\ref{stabilizers}).
The simplest syndrome readout protocol consists of two steps: {\em Step 1.}
Measure the eigenvalue of every $X$-type triangle operator $G(T)$
and record the outcome $\lambda(T)=\pm 1$.
{\em Step 2.} Measure the eigenvalue of every $Z$-type triangle operator $G(T)$
and record the outcome $\lambda(T)=\pm 1$.
Since any triangle operator commutes
with stabilizers, the eigenvalue of any stabilizer remains unchanged throughout the
execution of  this protocol. Hence the eigenvalues of
stabilizers $S^Z_p$ and $S^X_p$ are given by
$\lambda(S^Z_p)=\lambda(T^{SE}_p) \lambda(T^{NW}_p)$
and
$\lambda(S^X_p)=\lambda(T^{SW}_p) \lambda(T^{NE}_p)$,
see Eq.~(\ref{stabilizers}).
In practice it may be advantageous to use `interleaved' protocols in which
Steps~1,2 defined above are implemented in parallel, see Section~\ref{sec:circuit} for more details.

\section{Topological quantum order}
\label{sec:TQO}

Consider a Hamiltonian
\be
\label{H}
H=-\sum_T G(T),
\ee
where the sum is over all triangles of the lattice.
Recall that $G(T)$ are the $3$-qubit triangle operators defined in
Eq.~(\ref{G(T)}).
In this section we compute the entire eigenvalue spectrum of $H$
and show that on the torus $H$ has a four-fold degenerate ground state
separated from excited states by a constant energy gap.
Moreover, we shall construct  a unitary locality preserving
transformation $U$ such that
$UHU^\dag$ can be regarded as the standard toric code Hamiltonian
on the square lattice (with some irrelevant ancillary qubits).
Thus the model defined in Eq.~(\ref{H}) exhibits topological quantum order.

Let us first compute eigenvalues of $H$.
Since the stabilizers $S^X_p$, $S^Z_p$ commute with every term in $H$,
we can assume that any eigenvector $\psi$ of $H$ is also an eigenvector
of any stabilizer, that is, $S^X_p\psi=x_p\psi$ and $S^Z_p\psi=z_p \psi$
for some syndromes $x_p,z_p=\pm 1$.
Using identities Eq.~(\ref{stabilizers},\ref{gauge_qubit}) one gets
\[
H\psi=-\sum_p (1+x_p) \overline{X}_p\psi + (1+z_p) \overline{Z}_p\psi
\]
where $\overline{X}_p,\overline{Z}_p$ are the logical operators on the gauge qubit
located at the plaquette $p$.
Hence the restriction of $H$ onto the sector with fixed syndromes
$x_p,z_p$ describes $L^2$ non-interacting gauge qubits.

Let $\epsilon_0(x_p,z_p)$ and $\epsilon_1(x_p,z_p)$
be the smallest and the largest eigenvalues of a gauge qubit $p$ for a fixed syndromes $x_p,z_p$. A simple algebra shows that
\begin{center}
\begin{tabular}{|c|c|c|c|}
\hline
$x_p$ & $z_p$ & $\epsilon_0(x_p,z_p)$ & $\epsilon_1(x_p,z_p)$  \\
\hline
$1$ & $1$ & $-2\sqrt{2}$ & $2\sqrt{2}$ \\
$1$ & $-1$ & $-2$ & $2$ \\
$-1$ & $1$ & $-2$ & $2$ \\
$-1$ & $-1$ & $0$ & $0$ \\
\hline
\end{tabular}
\end{center}
To minimize the overall energy one has to choose $x_p=z_p=1$ for all $p$.
This shows that ground states of $H$ belong to the trivial syndrome sector
and the ground state energy is   $E_0=-2\sqrt{2}L^2$.
The ground state is four-fold degenerate since the code has
two logical qubits.

Excitations of $H$ fall into two categories. First, there are gauge excitations
that are confined to the trivial syndrome subspace $x_p=z_p=1$.
The energy cost of a single gauge excitation is $\Delta_g=\epsilon_1(1,1)-\epsilon_0(1,1)=4\sqrt{2}$.
A gauge excitation on a plaquette $p$ can be created locally
by a proper combination of operators $\overline{X}_p$ and $\overline{Z}_p$.
Hence gauge excitations do not carry any topological charge.
Secondly, there are syndrome excitations that flip syndrome bits $x_p$ and $z_p$.
The energy cost of a single syndrome excitation
is $\Delta_s=\epsilon_0(1,-1)-\epsilon_0(1,1)=2(\sqrt{2}-1)$. It corresponds to flipping $x_p$ (or $z_p$) on any plaquette $p$.
A single syndrome excitation however cannot be created locally due to the
constraints $\prod_p x_p=\prod_p z_p=1$, see the previous section.
It means that syndrome excitations can only be created in pairs.
Each pair costs energy $2\Delta_s$.

We can now show that the Hamiltonian of Eq.~\eqref{H} is locally equivalent to Kitaev's toric code. Consider a quantum circuit $U$ shown on Fig.~\ref{decoupling_circuit}.
It consists of four rounds of CNOT gates, $U=U^{(4)} U^{(3)}U^{(2)}U^{(1)}$,
where $U^{(j)}$ is a tensor product of $L^2$ disjoint CNOT gates
labeled by $j$ on Fig.~\ref{decoupling_circuit}. Note that $U$ is a locality preserving
transformation, that is, the Heisenberg evolution of any observable
$O\to UOU^\dag$ can only enlarge the support of $O$ by a few units of length.
Such transformations do not change any topological features of the model~\cite{CGW10a}. A simple algebra shows that
the transformed stabilizers $US^X_pU^\dag\equiv A_p$ and $US^Z_p U^\dag\equiv B_p$ coincide with the star and plaquette operators of the Kitaev's toric code
on a tilted square lattice, see Fig.~\ref{decoupling_circuit}.
Furthermore, the transformed gauge generators $U\overline{X}_pU^\dag\equiv J^X_p$
and $U\overline{Z}_pU^\dag\equiv J^Z_p$ become one-qubit Pauli operators
$X$ and $Z$ respectively acting on the qubit located at the bottom edge of $p$,
see Fig.~\ref{decoupling_circuit}.
Using identities Eqs.~(\ref{stabilizers},\ref{gauge_qubit}) we arrive at
\[
H'\equiv UHU^\dag =-\sum_p J^X_p + J^X_p A_p + J^Z_p + J^Z_p B_p.
\]
The same arguments as above show that ground states  of $H'$
are defined by equations $A_p\psi_0=B_p\psi_0=\psi_0$ and
$(J^X_p+J^Z_p)\psi_0=\sqrt{2}\psi_0$ for all plaquettes $p$.
Thus any ground state of $H'$ must have a form
$\psi_0=\psi_{top}\otimes \psi_{anc}$,
where $\psi_{top}$ is a ground state of the Kitaev's toric code
on the tilted square lattice while $\psi_{anc}$ is a tensor product
of one-qubit ancillary states located on horizontal edges of the original
lattice.  Such ancillary unentangled states clearly have no effect on topological features
of the model. We conclude that  that the Hamiltonian Eq.~(\ref{H})
is in the same topological phase as the toric code model.
The exact solvability of the model clearly
extends to a more general Hamiltonian $H=-\sum_T g_T G(T)$, where $g_T$ are arbitrary coefficients.

Consider now a modified Hamiltonian
\[
H''=-\sum_p J^X_p + J^Z_p + A_p +B_p.
\]
Note that $H'$ and $H''$ have the same ground subspace and $H''$ coincides with the ordinary toric code Hamiltonian~\cite{Kitaev03}
if one ignores the ancillary qubits.
For any state $\phi$ orthogonal to the common ground
subspace of $H',H''$ and for any parameter $0\le t\le 1$ one has
\[
\la \phi|(1-t)H' + t H''|\phi\ra \ge (1-t)\Delta' + t\Delta''
\]
where $\Delta'=4(\sqrt{2}-1)$ and $\Delta''=2$ are the energy gaps of $H'$ and $H''$ respectively. It follows that the Hamiltonian $(1-t)H'+tH''$ has energy
gap at least $\min{(\Delta',\Delta'')}=4(\sqrt{2}-1)$ for all $0\le t\le 1$.
 Hence we can continuously deform $H'$ to $H''$ without closing the gap.

The decoupling circuit of Fig.~\ref{decoupling_circuit} illustrates a feature of a more general family of 2D translationally-invariant subsystem codes described in \cite{Bombin11}: they can be mapped to (one or more copies of) the standard toric code with additional decoupled qubits by a constant-depth quantum circuit.

\begin{figure}[h]
\centerline{\includegraphics[height=5.5cm]{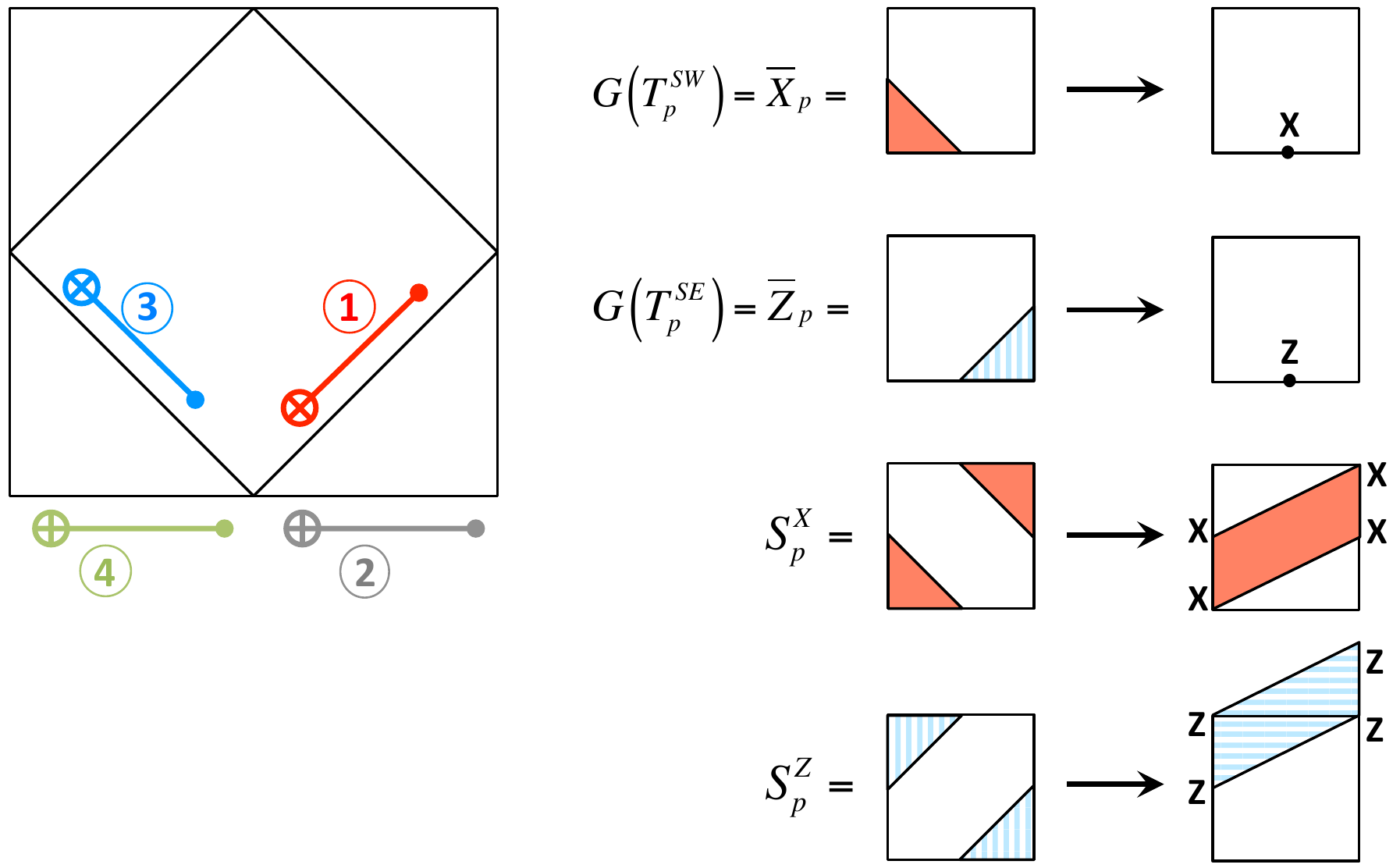}}
\caption{Decoupling circuit. Applying four rounds of CNOT gates  as shown on the left in a translational invariant way transforms the gauge operators and stabilizers as shown on the right. The stabilizer generators become those of Kitaev's toric code
on a tilted square lattice, while the extra gauge generators are mapped to single-qubit Pauli operators
 acting on ancillary qubits (horizontal edges).
 Thus, the circuit has the effect of decoupling the gauge operators from the toric code.}
\label{decoupling_circuit}
\end{figure}
We note that in general the Hamiltonian obtained by the sum of the gauge operators of a topological subsystem code does not necessarily produce topological order.
The peculiar feature of  the present model is that the canonical logical Pauli operators
on the gauge qubits $\overline{X}_p$ and $\overline{Z}_p$ are {\em local}.
In contrast, it was shown in \cite{Bombin11'} that the subsystem color code can be obtained from multiple copies of the ordinary  color code by gauging out both local and non-local logical operators. In particular, some of the logical operators that are gauged out carry topological charge. The present analysis does not apply to such models.


\section{Subsystem surface codes}
\label{sec:SSC}

We can now describe a subsystem version of the simplest
surface code on a planar lattice with two rough and  smooth boundaries that
encodes one logical qubit~\cite{BK:surface}.
Now the lattice has open boundary
conditions. A lattice of size $L\times L$ has $(L+1)^2$ vertices,
$2L(L+1)$ edges, and $L^2$ plaquettes. As before, code qubits
are placed at vertices and centers of edges, so the total number of code qubits
is $n=(L+1)^2+2L(L+1)=3L^2+4L+1$.

For every edge $e$ lying on the boundary of the lattice
define a weight-$2$ Pauli operator  $S_e$ of type $XX$ or $ZZ$ as shown on Fig.~\ref{fig:code_lattice}.
Let $\calS$ be  the group generated by weight-$6$ operators
$S^X_p$, $S^Z_p$ defined earlier and all weight-$2$ operators $S_e$
associated with boundary edges. One can easily check that $\calS$
is an abelian group with $s=2L^2+4L$ independent generators.
Therefore it defines a stabilizer code encoding $k'=n-s=L^2+1$ qubits.

\begin{figure}[h]
\centerline{\includegraphics[height=7cm]{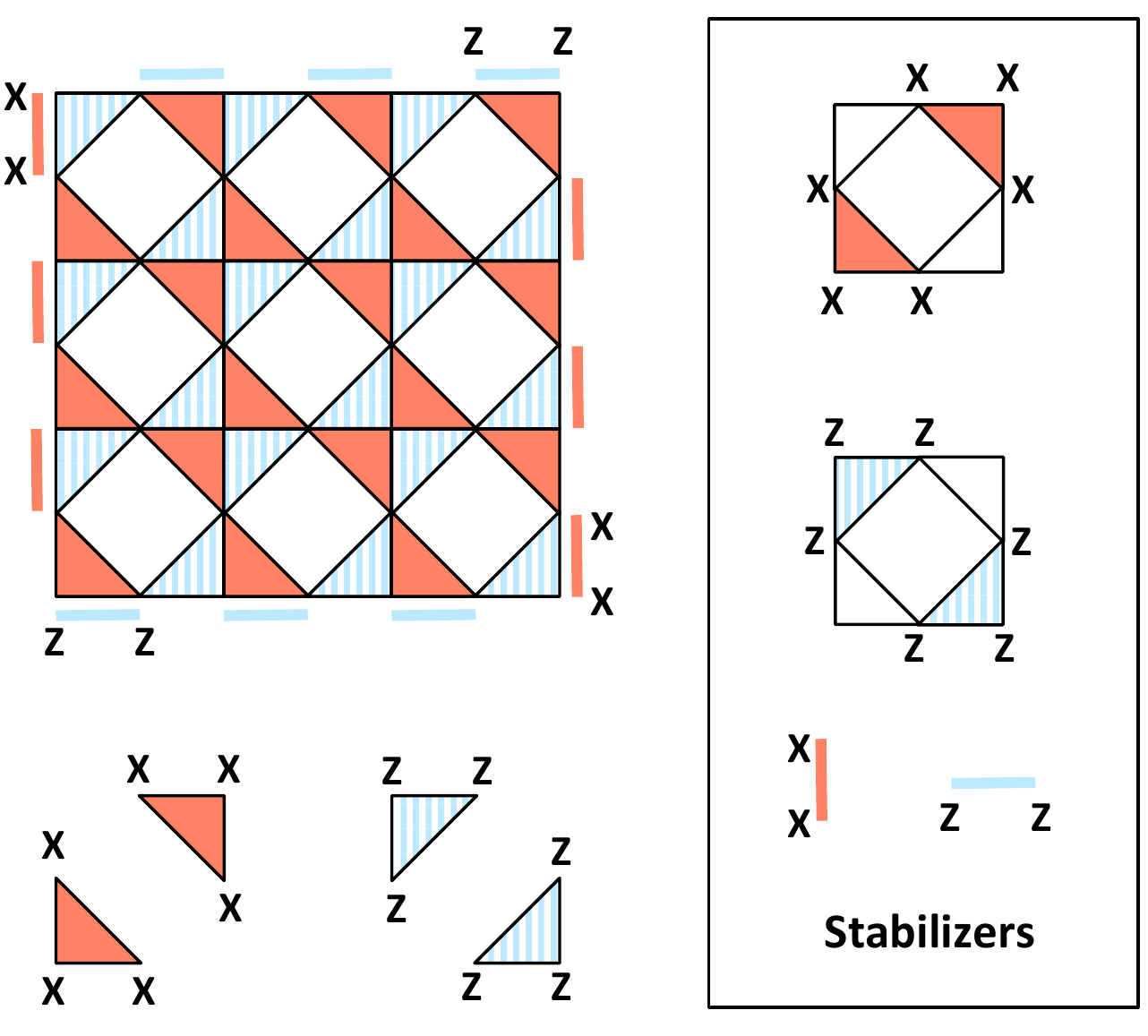}}
\caption{Subsystem version of the surface code.
}
\label{fig:code_lattice}
\end{figure}

Since the additional weight-$2$ stabilizers $S_e$ lying on the boundary
commute with all triangle operators $G(T)$, we can use Eq.~(\ref{gauge_qubit})
to define $g=L^2$ gauge qubits associated with plaquettes of the lattice.
Logical operators on the remaining $k=k'-g=1$ logical qubit can be chosen
as $\overline{X}_1$ and $\overline{Z}_1$, see Eq.~(\ref{logical1}),
that is, a horizontal line of $X$'s and a vertical line of $Z$'s. Note that
a vertical line of $X$'s and a horizontal line of $Z$'s are no longer logical operators
because they anti-commute with some of the boundary stabilizers $S_e$.
The same arguments as above imply that the code has minimum distance $d=L$.
An extension to a planar lattice with punctured holes encoding multiple logical qubits
is sketched on Fig.~\ref{fig:lattice_with_holes}.

\begin{figure}[h]
\centerline{\includegraphics[height=6cm]{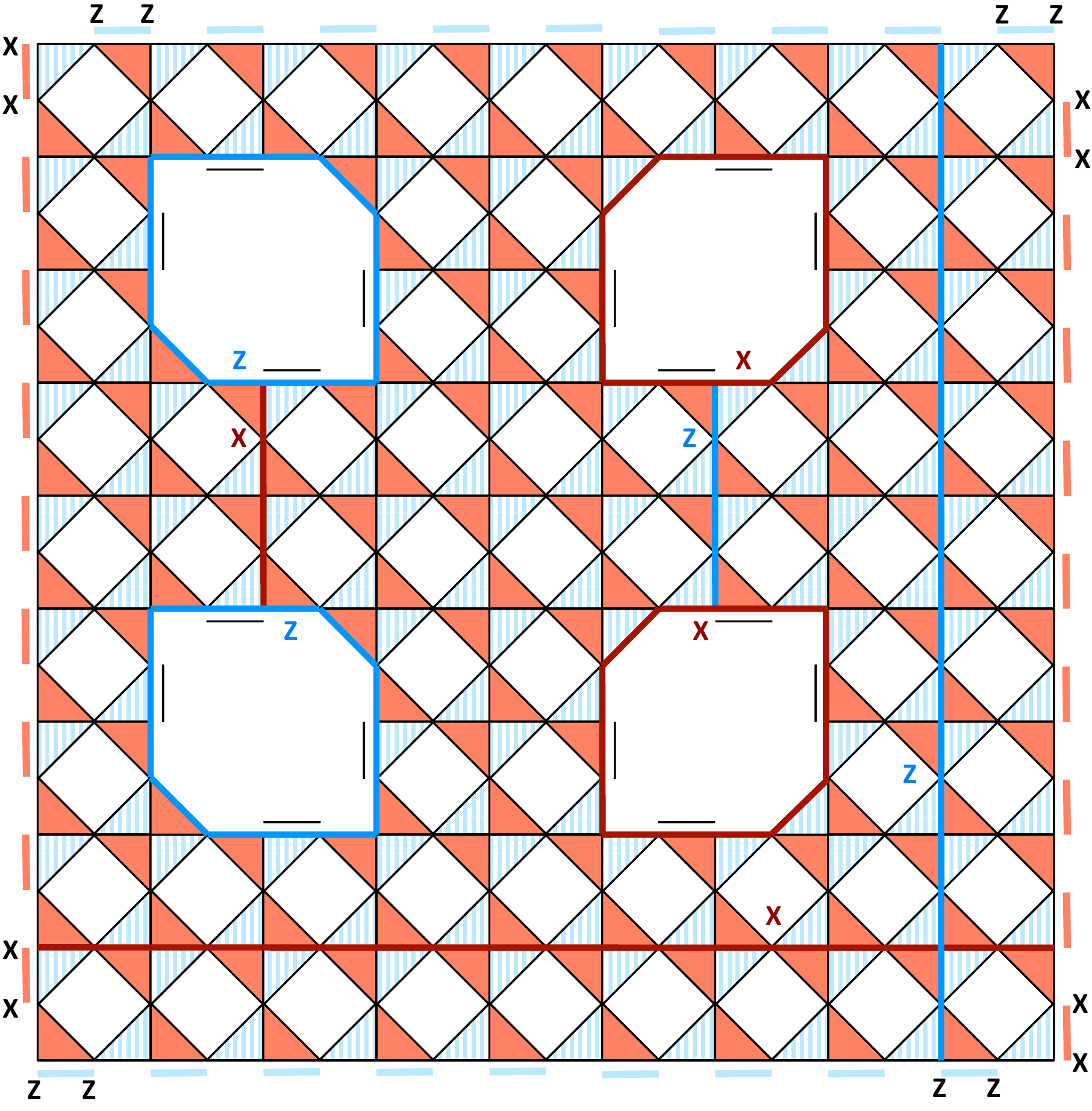}}
\caption{Subsystem version of the surface code with two punctured holes.
Each pair of holes encodes one logical qubit. The external boundary of the lattice
generates the third logical qubit.
Thick red and blue lines show the logical operators.
}
\label{fig:lattice_with_holes}
\end{figure}

\section{Error correction with noiseless syndromes}
\label{sec:ec2D}

In this section we propose an error correction protocol for the idealized setting
where the syndrome readout is noiseless. For simplicity, we shall first focus on
the subsystem toric code defined in Section~\ref{sec:STC}.
It suffices to construct a protocol for correcting errors of $X$-type (bit flips).
Due to the code symmetry, the same protocol can be applied to $Z$-type errors.
 Let $E$ be an unknown Pauli error of $X$-type. We will say that $E$
creates a {\em defect} at a plaquette $p$ iff $E$ anti-commutes with the stabilizer
$S^Z_p$. The syndrome measurement reveals a configuration of defects created
by $E$.  The key observation is that any single-qubit $X$-error
creates exactly two defects.  Indeed, an $X$ error on a vertical or horizontal
edge $e$ creates defects at the two plaquettes adjacent to $e$.
An $X$ error at any vertex $u$ creates defects at the two plaquettes
lying in the north-west and the south-east quadrants of $u$.
The relationship between single-qubit $X$-errors and the corresponding
pairs of defects can be captured by introducing a {\em virtual lattice} $\Lambda$
that consists of virtual vertices and virtual edges.
A virtual vertex $p$ represents a stabilizer $S^Z_p$
(plaquette $p$ of the original lattice), while a virtual edge  represents a
pair of defects that can be created by a single-qubit $X$ error.
One can easily check that
$\Lambda$ is the regular triangular lattice, see Fig.~\ref{fig:virtual_lattice_toric}.

Furthermore, for each virtual edge $e=(p,q)$ there is only one $X$-error
creating a pair of defects at $p$ and $q$.
For any virtual vertex $p$ let $\delta(p)$ be the set of virtual edges incident to $p$.
Then
\[
S^Z_p=\prod_{e\in \delta(p)} Z_e,
\]
that is, $Z$-type stabilizers can be regarded as star operators of the standard
toric code defined on the triangular lattice. Note that
a closed loop on the virtual lattice enclosing any triangular face
is an $X$-type triangle operator $G(T)$, while non-contractible closed
loops correspond to logical operators.

\begin{figure}[h]
\centerline{\includegraphics[height=5cm]{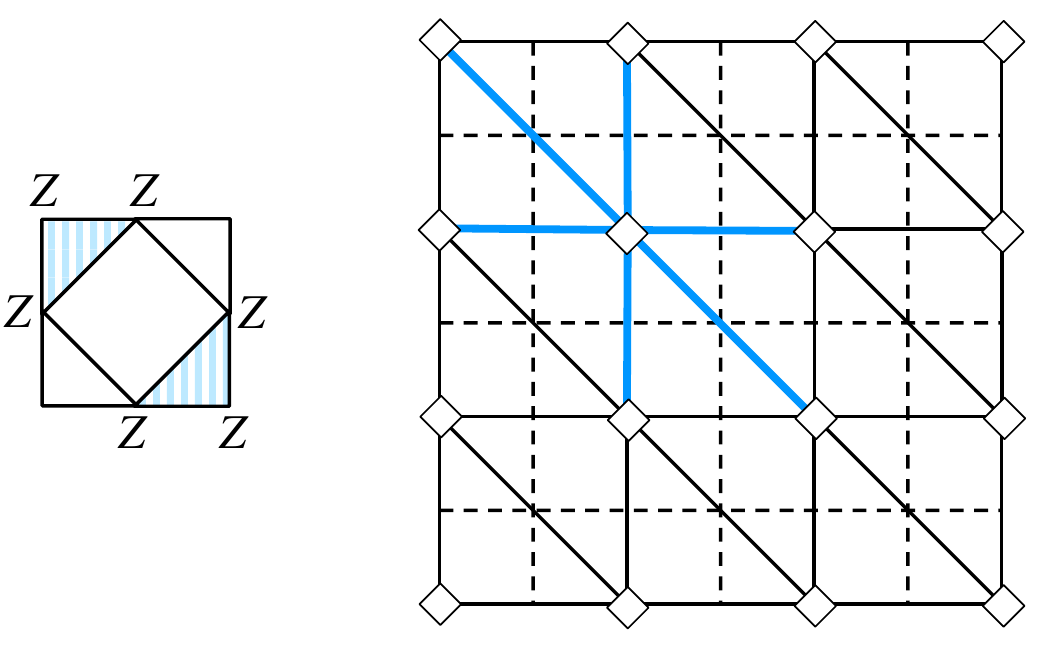}}
\caption{The virtual lattice $\Lambda$ (solid lines) describing correction of $X$-type errors
for the subsystem toric code of Fig.~\ref{fig:code_lattice_toric}.
Opposite sides of the lattice must be identified.
The original square lattice is shown by dashed lines.
Each virtual edge represents a code qubit.
Stabilizers of $Z$-type $S^Z_p$ correspond to stars on the virtual lattice
(solid blue lines).  Triangle operators of $X$-type correspond to triangular faces
of the virtual lattice (not shown).
Error correction amounts to finding the minimum weight matching of
defects on the virtual lattice.
The virtual lattice describing correction of $Z$-type errors is obtained from $\Lambda$ by the
 $90^\circ$ rotation.}
\label{fig:virtual_lattice_toric}
\end{figure}

Assuming that errors on different qubits are independent and have the same rate $p$,
the most likely error $E^*$ consistent
with the observed syndrome coincides with the minimum weight
matching of defects on the virtual lattice.
The latter can be found efficiently using the Edmonds's algorithm,
see~\cite{Dennis01} for details.
Choosing $E^*$ as a correction operator always returns the system back to the
codespace $\calC$. The overall evolution of the system is described by
an operator $EE^*$ which has trivial syndrome and thus can be viewed as a
linear combination of closed loops on the virtual lattice with $\ZZ_2$ coefficients.
Error correction is successful iff $EE^*$ acts non-trivially only on the gauge qubits,
that is, $EE^*$ is a product of $X$-type triangle operators $G(T)$.
Equivalently, $EE^*$ must represent the trivial cycle in the homology group
$H_1(\Lambda,\ZZ_2)$.

As was argued in~\cite{Dennis01,Poulin09,bombin:topsub},
the optimal error correction strategy amounts to
finding the most likely {\em equivalence class} of errors
consistent with the observed syndrome rather than the most likely error.
More specifically, let $\calG$ be the {\em gauge group} generated by the triangle operators $G(T)$
of $X$-type. Since logical operators acting on gauge qubits are irrelevant,
all errors in the coset $E\cdot \calG$ must be considered equivalent.
Note that there are only four cosets of $\calG$
consistent with the observed syndrome, namely, $E\cdot \calG$, $E\overline{X}_1 \cdot \calG$,
$E\overline{X}_2 \cdot \calG$, and $E\overline{X}_1 \overline{X}_2 \cdot \calG$.
As was shown in~\cite{Dennis01,bombin:topsub}, the
probability of each coset can be expressed as the partition function
of the random bond $\pm J$ Ising model. In our case Ising spins
reside on triangular faces of the virtual lattice, anti-ferromagnetic bonds
correspond to virtual edges that belong to $E$, and the inverse temperature $\beta$
is determined by the Nishimori condition
$e^{-2\beta J}=\frac{p}{1-p}$.
The threshold error rate $p_c$  for the optimal decoding
coincides with the density of anti-ferromagnetic bonds
at the phase transition point~\cite{Dennis01}.
The latter has been recently computed by Queiroz~\cite{Queiroz06}
who found $p_c\approx 7\%$. The analogous threshold error rate for the standard toric code is known to be approximately $11\%$, see~\cite{Dennis01}.

One can similarly construct the virtual lattice for the subsystem surface code,
see Fig.~\ref{fig:virtual_lattice_open}. The only difference is that now
defects can be matched either to each other, or to one of the boundaries.

\begin{figure}[h]
\centerline{\includegraphics[height=5cm]{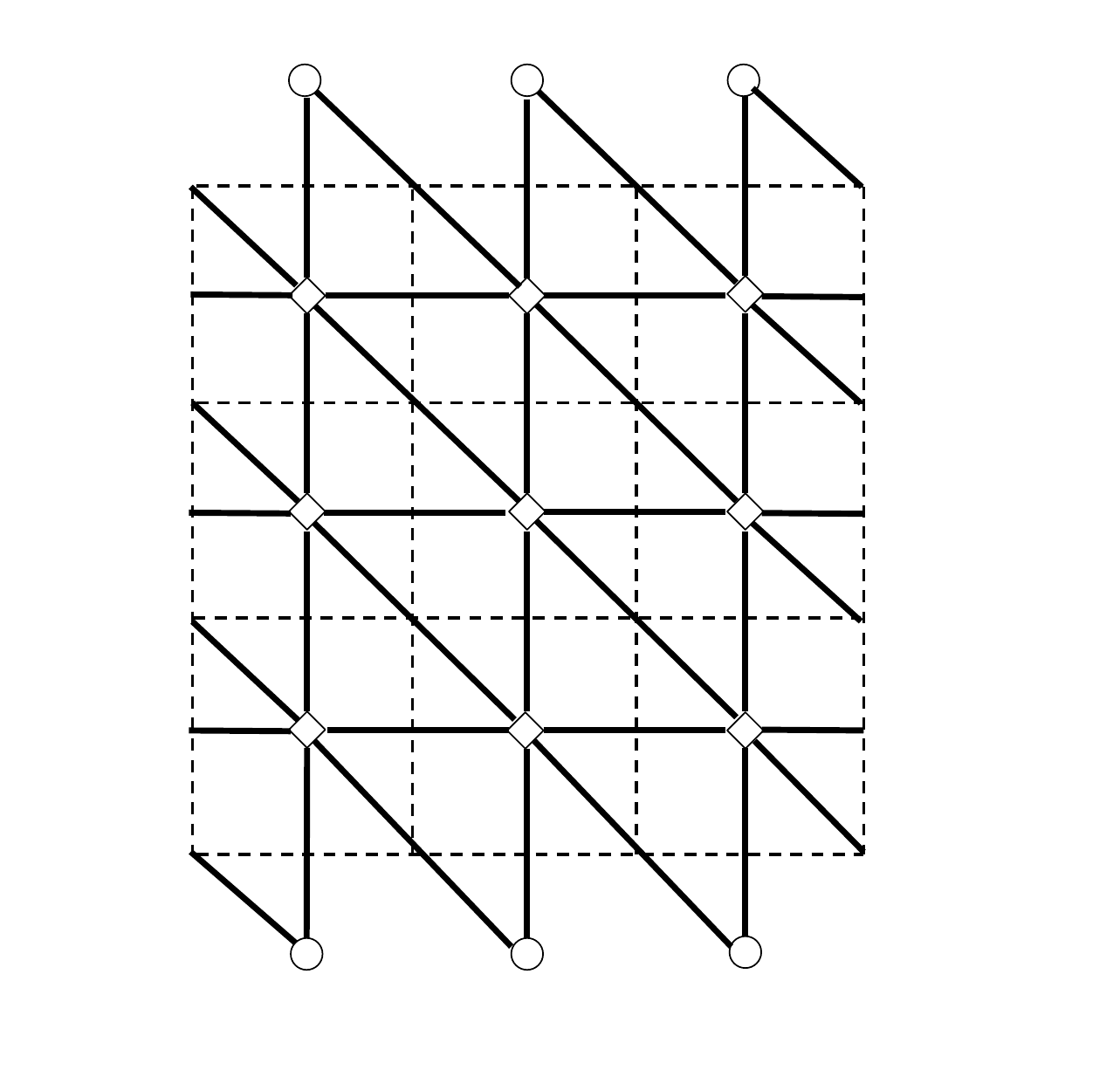}}
\caption{The virtual lattice $\Lambda$ (solid lines) describing correction of $X$-type errors
for the subsystem surface code of Fig.~\ref{fig:code_lattice}.
Diamonds and open circles represent stabilizers
$Z$-type stabilizers
$S^Z_p$ and $S_e$ respectively.
The virtual lattice describing correction of $Z$-type errors is obtained from $\Lambda$ by the
 $90^\circ$ rotation.}
\label{fig:virtual_lattice_open}
\end{figure}

\section{Error correction for the circuit-based error model}
\label{sec:circuit}

Let us now consider more realistic settings when the syndrome information
itself may contain errors.
We assume that the library of elementary operations supported by the quantum hardware includes
CNOT gates between nearest-neighbor qubits, single-qubit measurements in $X$- or $Z$-basis, and
preparation of single-qubit ancillary states $|0\ra$ and $|+\ra$.
Our error correction protocol will be defined as a sequence of {\em rounds},
where at each round any qubit can participate in one elementary operation.
We assume that each elementary operation is noisy, so it can fail
with a probability $p$ that we call an {\em error rate}. More precisely,
our error model, borrowed from~\cite{Fowler08},  is defined as follows.
\begin{itemize}
\item A noisy $X$ or $Z$ measurement is the ideal measurement in which the outcome is flipped
with probability $p$.
\item A noisy $|0\ra$ or $|+\ra$ ancilla preparation returns the correct state with probability $1-p$ and the
 orthogonal state with probability $p$.
\item A noisy CNOT gate  is the ideal CNOT followed by
an error $(1-p) Id + p \calD$,
where $Id$ is the identity map and $\calD$ is the fully depolarizing two-qubit map
applying one of $16$ two-qubit Pauli operators with probability $1/16$ each.
\end{itemize}
We assume that
once a qubit has been measured, its state is unknown. To use such a qubit again, it must be
explicitly initialized using the noisy preparation defined above.
We do not need to define memory errors because no qubit will be idle at any round of
our protocol.

In order to measure eigenvalue of individual triangle operators $Z_i Z_j Z_k$ and $X_p X_q X_r$
we shall use quantum circuits shown on Fig.~\ref{fig:ancilla}.
Measuring a single triangle operator requires one ancillary qubit
and five rounds. Similar circuits with one extra CNOT gate were used
in fault-tolerant protocols based on the standard surface code~\cite{Dennis01,Fowler08},
where one has to measure four-qubit plaquette and star operators
$Z^{\otimes 4}$ and $X^{\otimes 4}$.

\begin{figure}[h]
\centerline{\includegraphics[height=6cm]{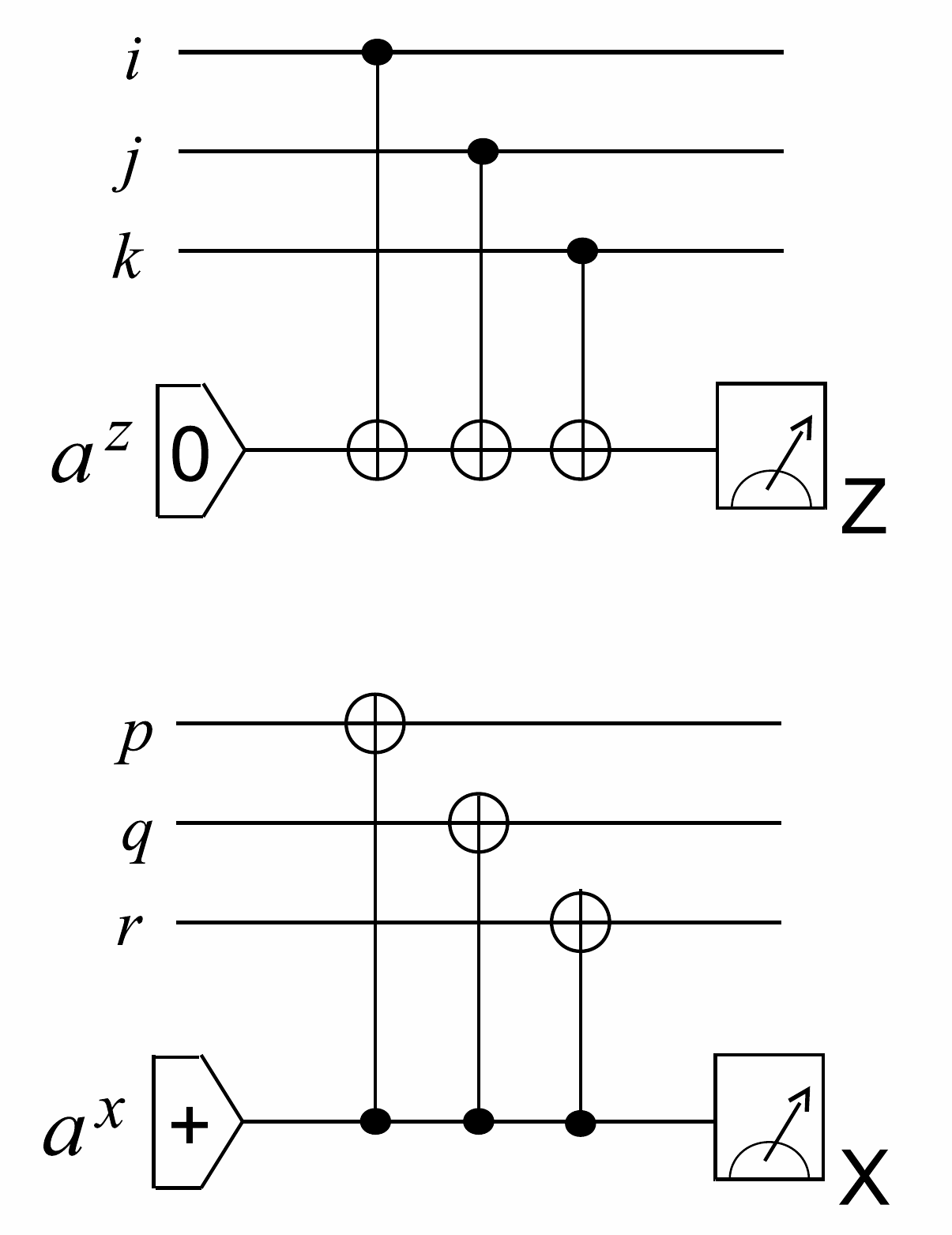}}
\caption{Quantum circuits for
measuring the eigenvalue of triangle operators $Z_i Z_j Z_k$ (top)
and $X_p X_q X_r$ (bottom). The circuits use one ancillary qubit.
A single $Z$ ($X$) error on the ancilla  $a^z$ ($a^x$) can propagate via CNOTs to at most one
$Z$ ($X$) error on code qubits modulo gauge operators.
A single $X$ ($Z$) error on the ancilla  $a^z$ ($a^x$)  results in a faulty
measurement outcome.
}
\label{fig:ancilla}
\end{figure}

We begin by highlighting strengths and weaknesses of the subsystem and the standard
surface code.  The key advantage of the SSC is a limited propagation of errors by the CNOT gates.
Consider, for example, a single $Z$ error
on the ancillary qubit $a^z$ in the circuit measuring the triangle operator $Z_i Z_j Z_k$,
see Fig.~\ref{fig:ancilla} (top). Depending on the round at which the error has occurred, it
propagates to one of the errors $Z_i$, $Z_i Z_j$, $Z_i Z_j Z_k$ on the code
qubits. Multiplying the last two errors by the triangle operator $Z_i Z_j Z_k$
leaves only single-qubit errors $Z_i$, $Z_k$, and the identity error.
It shows that a single $Z$ error on the ancilla $a^z$ can lead to at most one $Z$ error on the
code qubits, modulo gauge operators.  An $X$ error on $a^z$ cannot propagate to code qubits,
so its only effect is
flipping the measurement outcome which results in a faulty syndrome bit.
Since each stabilizer is composed of two disjoint triangle operators,
see Fig.~\ref{fig:code_lattice}, each syndrome bit effectively accumulates errors
from ten rounds. For comparison, the standard surface code
requires six rounds to measure a single syndrome bit, however a single
error on the ancilla can feed back to two errors on code qubits
(such double errors were referred to as `horizontal hooks' in~\cite{Dennis01}).
This shows that neither of the two codes offers an obvious advantage compared
with the other.

Let us now discuss our syndrome readout circuit in more details.
Since individual syndromes can no longer be trusted, we shall repeat
syndrome measurements $T$ times for some $T\gg 1$.
We choose $T=L$ in all numerical simulations.
Error correction is deemed successful if the accumulated error $E$
on the code qubits that results from $T$ noisy syndrome measurements
can be corrected based on the full observed syndrome history
and one final syndrome readout which we assume to be {\em noiseless}.
(In practice the final readout  involves
measuring each code qubit in $|0\ra$ or $|+\ra$ basis.
Outcomes of such measurement determine the syndrome of $Z$-type
or $X$-type stabilizers respectively. We can assume that
single-qubit measurements
are noiseless by  absorbing measurement errors into memory
errors that occurred one round earlier.)

Repeating the circuits shown on Fig.~\ref{fig:ancilla} cyclically $L$ times
would naively require $5L$ rounds. We can reduce the required number
of rounds to $4L$ by  introducing two ancillary qubits for each triangle.
One of them serves as the ancilla $a^z$ or $a^x$ shown
on Fig.~\ref{fig:ancilla}.   The purpose of the second ancilla is
to enable offline  preparation of $|0\ra$ or $|+\ra$ states which can be
performed in the same round as the measurement of the main ancilla.
To simplify notations, we will only show one ancilla per triangle
and assume that this ancilla is initialized in the $|0\ra$ or $|+\ra$
state at the end of each measurement round (with a duly added noise).

The readout circuit will be chosen such that any fixed triangle alternates between
three gate rounds and one measurement round in a cyclic fashion.
It can be represented by a local readout schedule
\be
\label{local_schedule}
\cdots M G_1 G_2 G_3 M G_1 G_2 G_3 M \cdots
\ee
where $M$ is either $X$-type or $Z$-type measurement on the ancilla
while $G_1, G_2, G_3$ are CNOT gates coupling the ancilla and the code qubits
(the ancilla is control for $X$-type triangles and target for $Z$-type triangles).
Time flows from the left to the right. Note that each triangle has $24$ different
choices of its local schedule. Indeed, there are $6$ choices of the order in which the
ancilla is coupled to the code qubits and  $4$ choices of the round at which the triangle is measured.
Local schedules chosen at different triangles must be consistent with each other,
such that  at any round any qubit participates in at most one operation.

We shall focus on schedules which are periodic both in space and time.
Hence the entire readout circuit is completely specified by local schedules
inside a 3D elementary cell which consists of four rounds labeled as  $0,1,2,3$
and four triangles of type NE, SE, SW, NW located at some fixed plaquette $p$.
We begin by observing that a consistent schedule cannot have a round at which
every triangle applies a CNOT. Indeed, this would define a matching between
code qubits and triangles. However, the lattice has $4L^2$ triangles and only
$3L^2$ code qubits. This observation shows that at every round all triangles of some
type have to be measured. Furthermore, it is natural to demand that
if some pair of triangles form a stabilizer, these triangles
must be measured in two consecutive rounds
(otherwise the corresponding syndrome bit would accumulate too much errors).
In other words, we would like to measure all triangles of $X$-type in two consecutive
rounds and all triangles of $Z$-type in the two remaining consecutive rounds.
Our choice of the measurement rounds satisfying these conditions is shown on Fig.~\ref{fig:schedule_meas}.
Note that all $X$-type triangles are measured at rounds $3,0$, while
all $Z$-type triangles are measured at rounds $1,2$.

\begin{figure}[h]
\centerline{\includegraphics[height=4cm]{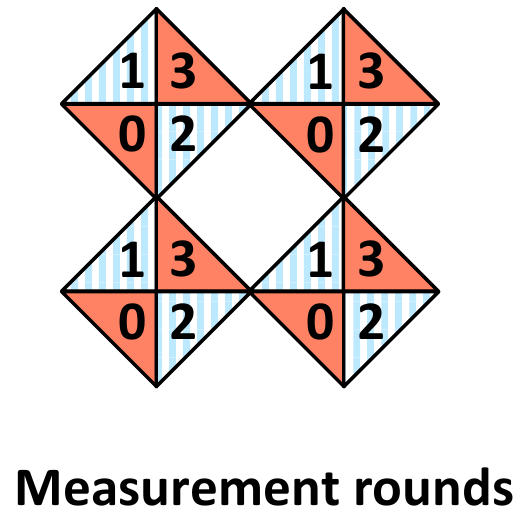}}
\caption{The numbers indicate rounds (modulo four) at which the ancillas assigned to each triangle
have to be measured. }
\label{fig:schedule_meas}
\end{figure}

It remains to schedule CNOT gates.
We will say that a schedule is
{\em correct} iff for each triangle one can move all gates forward in time
towards the next measurement. Here moving a gate is allowed as long as it commutes
with other gates.  A correct schedule faithfully simulates the simple syndrome
extraction routine described in Section~\ref{sec:STC}
since after moving all gates towards the next measurement
all $X$-type stabilizers are measured at rounds $3,0$,
while all $Z$-type stabilizers are measured at rounds $1,2$.
We shall look for a schedule which is correct and
invariant  under the exchange of $X$ and $Z$ (modulo lattice symmetries
and time translations). The latter requirement is rather natural since our
error model does not have a bias towards $X$ or $Z$ errors.

To derive sufficient conditions for correctness let us introduce some terminology.
Consider any triangle $T$ and its local schedule, see Eq.~(\ref{local_schedule}).
We shall refer to the gates $G_1$ and $G_3$ as the {\em first gate} and
the {\em last gate} for the chosen triangle.
If some pair of triangles $T$ and $T'$ are measured  at two subsequent rounds $j$ and $j+1$
respectively,   we will say that $T$ is the {\em leading triangle} and $T'$ is the {\em tailing triangle}.
If $T$ and $T'$ are measured two rounds apart, the choice of the leading and the tailing triangle is
arbitrary.
\begin{lemma}
\label{lemma:correctness}
A schedule of CNOTs  is correct if
for any $X$-type and any $Z$-type triangle at least one of the following
is true:
\begin{itemize}
\item The two triangles are disjoint,
\item The two triangles are measured two rounds apart,
\item The last gate of the leading triangle commutes with the first gate of the tailing triangle.
\end{itemize}
\end{lemma}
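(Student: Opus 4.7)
The plan is to reduce correctness to a combinatorial condition on the order of non-commuting gate pairs, and then verify this condition under each of the three hypotheses. The first step is to classify which CNOT pairs in the circuit fail to commute. Within a single triangle the three CNOTs share only the ancilla, on which they play identical roles (all controls for $X$-type, all targets for $Z$-type), so they commute pairwise. Between two triangles of the same Pauli type sharing a code qubit $q$, both CNOTs on $q$ use $q$ in the same role, so they also commute. The only non-commuting inter-triangle pair therefore arises between an $X$-type triangle $T$ and a $Z$-type triangle $T'$ that share a code qubit $q$: the CNOT of $T$ acting on $q$ (which targets $q$) and the CNOT of $T'$ acting on $q$ (which uses $q$ as a control) fail to commute.

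Two schedules related by repeated swaps of adjacent commuting operations implement the same quantum process, and conversely two schedules are commutation-equivalent iff they agree on the relative order of every non-commuting pair. The ``simple'' canonical schedule from Section~\ref{sec:STC} performs each triangle's three CNOTs consecutively just before its measurement, with triangle blocks ordered by the rounds of Fig.~\ref{fig:schedule_meas}. Thus the schedule is correct iff, for every non-commuting pair $(g_T, g_{T'})$ (with $T$ of $X$-type, $T'$ of $Z$-type, sharing code qubit $q$), the original schedule places $g_T$ before $g_{T'}$ exactly when $T$ is measured before $T'$ canonically.

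It remains to verify this ordering under the three hypotheses. Condition~1 is vacuous: disjoint triangles share no code qubit and produce no non-commuting pair. For condition~2, WLOG take $T$ measured at round $r$ and $T'$ at round $r+2$, so $g_T$ is scheduled in one of $\{r-3,r-2,r-1\}$ and $g_{T'}$ in one of $\{r-1,r,r+1\}$. Both gates act on $q$, so schedule consistency (at most one operation per qubit per round) forbids them from sharing a round; hence $g_T$ strictly precedes $g_{T'}$, matching the canonical order. For condition~3, WLOG take $T$ leading at round $r$ and $T'$ tailing at round $r+1$; then $g_T$ is in $\{r-3,r-2,r-1\}$ and $g_{T'}$ is in $\{r-2,r-1,r\}$. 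The canonical order requires $g_T$ before $g_{T'}$, and the only consistent configuration that violates this is $g_T$ at round $r-1$ and $g_{T'}$ at round $r-2$, i.e., $g_T = G_3^T$ (the last gate of $T$) and $g_{T'} = G_1^{T'}$ (the first gate of $T'$). Because opposite-type CNOTs cannot commute on any shared code qubit, the assumption that $G_3^T$ and $G_1^{T'}$ commute forces them to act on disjoint code qubits, which rules out exactly this bad configuration.

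The main obstacle is bookkeeping: one must carefully respect schedule consistency, the periodicity of the four-round cycle, and the exact translation of ``commutes'' (in the gate sense) into ``acts on disjoint code qubits'' for a pair of opposite-type triangles. Once these are in place, the proof reduces to the direct case analysis above, together with the commutation classification of the first paragraph.
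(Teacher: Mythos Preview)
Your proof is correct and takes a route related to but distinct from the paper's. The paper constructs the deformation explicitly: for each triangle it moves $G_1$ one round forward and $G_3$ one round backward so that $G_1G_2G_3$ cluster at the middle round, and then slides each block $G_1G_2G_3M$ to its measurement round; the lemma's hypotheses are precisely what justify these swaps pairwise (schedule consistency forces $G^x_1/G^z_3$ and $G^x_3/G^z_1$ disjoint in the two-apart case, and Condition~3 supplies commutativity of $G_3^{\text{lead}}$ with $G_1^{\text{tail}}$ in the one-apart case). You instead invoke the abstract characterization of commutation-equivalence via relative orders of non-commuting pairs, classify those pairs (only opposite-type CNOTs on a shared code qubit), and verify the order directly. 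Your case analysis is essentially the contrapositive of the paper's: where the paper argues ``$G^x_3$ and $G^z_1$ commute, so the swap is valid'', you argue ``the only out-of-order configuration would make $g_T=G_3^T$ and $g_{T'}=G_1^{T'}$ share a code qubit, contradicting the hypothesis''. Your framing is cleaner and makes the role of Condition~3 more transparent; the paper's explicit deformation has the advantage of not appealing to the (standard but nontrivial) trace-theoretic fact that matching all non-commuting orders suffices for commutation-equivalence, and of handling the periodic wrap-around and the interleaving with measurements concretely rather than deferring them to ``bookkeeping''.
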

\begin{proof}
Suppose $T^x$ and $T^z$ are $X$-type and $Z$-type triangles respectively.
If $T^x$ and $T^z$ are measured two rounds apart,
their combined local schedules can be represented by a diagram
\begin{center}
\begin{tabular}{|c|c|c|c|c|c|c|c|c|}
\hline
$\cdots$ & $M^x$ & $G^x_1$ & $G^x_2$ & $G^x_3$ & $M^x$ & $G^x_1$ & $G^x_2$ & $\cdots$ \\
\hline
$\cdots$ & $G^z_2$ & $G^z_3$ & $M^z$ & $G^z_1$ & $G^z_2$ & $G^z_3$ & $M^z$ & $\cdots$ \\
\hline
\end{tabular}
\end{center}
The gates $G^x_1$ and $G^z_3$ must be disjoint. Similarly, the gates $G^x_3$ and $G^z_1$ must be disjoint.
Hence we can deform the diagram by moving $G^x_1$, $G^z_1$ one round forward and moving
$G^x_3$, $G^z_3$ one round backward obtaining an equivalent circuit:
\begin{center}
\begin{tabular}{|c|c|c|c|c|c|c|c|c|}
\hline
$\cdots$ & $M^x$ & & $G^x_1 G^x_2 G^x_3$ & & $M^x$ &  & $G^x_1 G^x_2 G^x_3$ & $\cdots$ \\
\hline
$\cdots$ & $G^z_1 G^z_2 G^z_3$ &  & $M^z$ &  & $G^z_1 G^z_2 G^z_3$ & & $M^z$ & $\cdots$ \\
\hline
\end{tabular}
\end{center}
We can further deform the circuit by moving each measurement
backwards towards the next gate.

Suppose now that  $T^x$ and $T^z$ are measured in two subsequent rounds such that $T^x$ is the leading and
$T^z$ is the tailing (the opposite case is completely analogous).
Then their combined schedules can be represented by a diagram
\begin{center}
\begin{tabular}{|c|c|c|c|c|c|c|c|}
\hline
$\cdots$ & $M^x$ & $G^x_1$ & $G^x_2$ & $G^x_3$ & $M^x$ & $G^x_1$ & $\cdots$  \\
\hline
$\cdots$ & $G^z_3$ & $M^z$ & $G^z_1$ & $G^z_2$ & $G^z_3$ & $M^z$  &$\cdots$ \\
\hline
\end{tabular}
\end{center}
By assumption, the gates $G^x_3$ and $G^z_1$ commute.
Hence we can deform the diagram by moving $G^x_3$ one round backward
and moving $G^z_1$ one round forward.
In addition, we can move $G^x_1$ one round forward and move $G^z_3$ one round backward.
It gives an equivalent circuit:
\begin{center}
\begin{tabular}{|c|c|c|c|c|c|c|c|c|}
\hline
$\cdots$ & $M^x$ & & $G^x_1 G^x_2 G^x_3 $ & & $M^x$ & & $G^x_1G^x_2 G^x_3 $ & $\cdots$ \\
\hline
$\cdots$ &  & $M^z$ & & $G^z_1G^z_2 G^z_3$ & & $M^z$ &    & $\cdots$ \\
\hline
\end{tabular}
\end{center}
We can further deform the circuit by moving each measurement
backwards towards the next gate.

After the deformation each triangle applies the entire sequence $G_1G_2 G_3 M$
in a single round which is two rounds apart from the measurement round in the original schedule.
We can now move the  sequence $G_1G_2 G_3 M$ two rounds forward for each triangle
simultaneously until $X$-type triangles apply $G^x_1G^x_2 G^x_3 M$ at rounds $0,3$
and  $Z$-type triangles apply $G^z_1G^z_2 G^z_3 M$ at rounds $1,2$.
This shows that the original schedule is correct.
\end{proof}

\begin{figure}[h]
\centerline{\includegraphics[height=5cm]{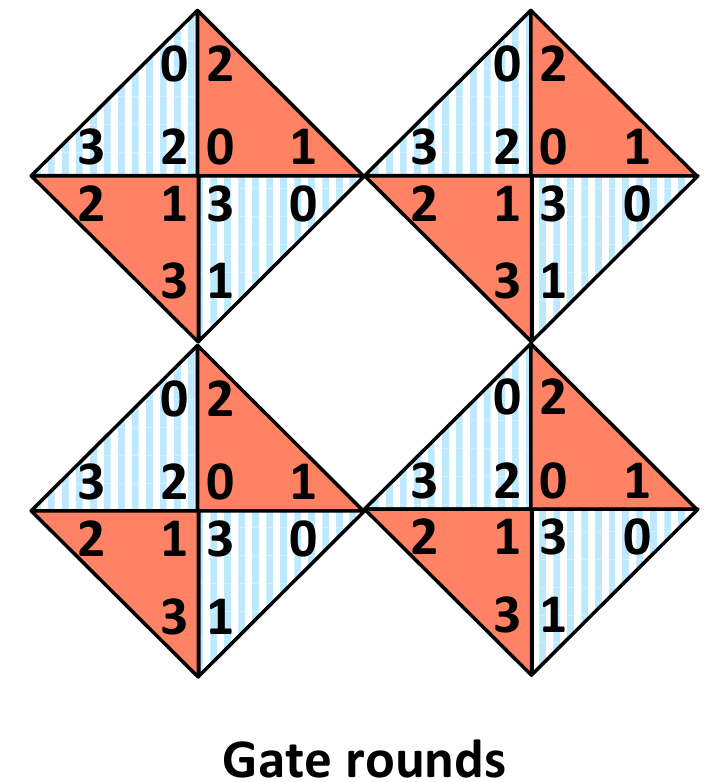}}
\caption{Example of a correct syndrome extraction schedule with four rounds labeled by $0,1,2,3$
repeated cyclically.
The  numbers assigned to vertices of each triangle indicate the rounds at which the code qubits comprising
 a triangle
are coupled to the ancilla by CNOT gates.
Measurement rounds are shown on Fig.~\ref{fig:schedule_meas}.
  One can check that any pair of $x$-type and $z$-type
triangles satisfies conditions of Lemma~\ref{lemma:correctness}. }
\label{fig:schedule_gates}
\end{figure}

Our choice of a CNOT schedule is shown on Fig.~\ref{fig:schedule_gates}.
One can easily check that it satisfies conditions of Lemma~\ref{lemma:correctness}.
It remains to define the classical post-processing step that
extracts the syndromes from the measured eigenvalues of triangle operators.
For any integer $t\in [0,L-1]$ and a plaquette $p$ we define
a syndrome bit $s^Z_p(t)$  as a product of eigenvalues of $Z$-type triangles
SE and NW located at the plaquette $p$  that were measured at rounds $4t+1$ and $4t+2$.
Similarly, we define a syndrome bit $s^X_p(t)$ as a product of eigenvalues of $X$-type triangles
SW and NE  located at the plaquette $p$  that were measured at rounds $4t+3$ and $4t+4$.
Hence each syndrome bit combines eigenvalues of two triangle operators
measured in two consecutive rounds.

Let us now move to the error correction protocol that takes as input the syndrome information
and outputs a correcting Pauli operator $E^*$ acting on the code qubits.
It mostly follows~\cite{Dennis01,Fowler08,Wang11}.
Our protocol deals with $X$-type and $Z$-type errors independently.
It should be noted  that the schedule  shown on Figs.~\ref{fig:schedule_meas},\ref{fig:schedule_gates}
is invariant under the horizontal reflection of the lattice
and shifting the time  by two rounds. Since the horizontal reflection exchanges $X$-type and $Z$-type
triangles, it suffices to analyze $X$-type errors.

Let us introduce a 3D virtual lattice $\Lambda$
that consists of virtual vertices
and virtual edges. A virtual vertex is a pair $(p,t)$, where
$p$ is a plaquette of the 2D code lattice and $t\in [0,L-1]$ is the discrete time.
We shall say that a virtual vertex $u=(p,t)$ has a {\em defect} iff
the syndrome bits $s^Z_p(t)$ and $s^Z_p(t+1)$ are different.
Hence the full syndrome history can be regarded as a configuration of defects
on the virtual lattice.

We begin by considering configurations of defects created by a single fault in the
readout circuit. Here a single fault includes one of the following possibilities:
\begin{itemize}
\item Wrong measurement outcome on some ancilla,
\item Wrong ancilla preparation,
\item One of the errors $IX$, $XI$, or $XX$ inserted after some CNOT gate.
\end{itemize}
In other words, a single fault is any event that can occur with probability $\Omega(p)$
in the limit $p\to 0$ (recall that we only keep track of $X$-type errors).
In order to define virtual edges we need the following observation.
\begin{lemma}
Any single fault in the readout circuit creates either $0$ or $2$ defects on the virtual lattice.
\end{lemma}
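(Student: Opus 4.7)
The plan is to reduce the lemma to a structural observation about defect patterns together with a case analysis that leans on the fault-tolerance of the three-qubit parity circuits already sketched after Fig.~\ref{fig:ancilla}. The structural step is this: in the absence of any fault, every $s^Z_p(t)$ equals the (time-independent) eigenvalue of $S^Z_p$ on the initial codeword, so the defect pattern is determined entirely by which syndrome bits have been flipped relative to their fault-free values. Writing $f_p(t)\in\{0,1\}$ for the indicator that $s^Z_p(t)$ is flipped, a defect sits at $(p,t)$ iff $f_p(t)\ne f_p(t+1)$; hence the total defect count is twice the number of maximal temporal intervals on which $f_p\equiv 1$, with the noiseless final readout acting as a fixed right boundary that either closes such an interval or is compensated by a defect arising from a residual code-qubit $X$-error surviving to the end.

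Next I would enumerate the elementary faults. A wrong measurement outcome on a $Z$-type triangle flips a single $s^Z_p(t)$, producing one length-one interval and hence exactly $2$ defects. A wrong $|0\rangle$ preparation of a $Z$-type ancilla is equivalent to a post-preparation $X$ on the ancilla; since the subsequent CNOTs use the ancilla as target, this $X$ cannot propagate to the code and merely flips the measurement outcome, again yielding $2$ defects. A wrong $|+\rangle$ preparation of an $X$-type ancilla places a $Z$ on the ancilla, which commutes past the subsequent ancilla-as-control CNOTs and produces only $Z$-errors on code qubits plus a flipped $X$-basis outcome; these are invisible to the $S^Z$ defect pattern (they are dealt with by the dual protocol), so we get $0$ defects.

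Third, for $IX$, $XI$, or $XX$ inserted after a CNOT I would propagate the $X$ components forward through the remainder of the readout circuit to put the fault into ``standard form'': a residual $X$-error $E$ on code qubits together with a set of flipped triangle outcomes. The key input, already established earlier in the paper, is that within a single triangle circuit a single ancilla $X$-error propagates, modulo the gauge triangle operator $G(T)$, to at most one $X$ on a code qubit, while ancilla errors of the other basis flip at most one outcome. A single residual $X$-error on a code qubit $q$ first present at round $t_0$ flips $s^Z_p(t)$ for all $t\ge t_0$ at each of the two plaquettes $p$ whose $S^Z$ stabilizer contains $q$, contributing one temporal endpoint per plaquette, for a total of $2$ defects. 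The combination of a flipped outcome with a trivial residual $X$ is handled as in the first case.

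The main obstacle is the $XX$ sub-case, where $X$'s land simultaneously on an ancilla and on a code qubit of the same triangle. I would resolve it by multiplying by the gauge triangle operator $G(T)$: any pair of $X$'s supported inside a triangle reduces modulo $G(T)$ to a single $X$ (or to the identity), so even after forward propagation the standard form still contains at most one residual code-qubit $X$-error and at most one flipped triangle outcome, both originating from the same triangle $T$. The resulting set $\{(p,t):f_p(t)=1\}$ is then either empty or a single interval at each of one or two plaquettes, joined at a common temporal endpoint, and the boundary count is again $0$ or exactly $2$.
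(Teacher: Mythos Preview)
Your approach is workable but more elaborate than the paper's, and the $XX$ sub-case has a step you assert rather than verify. The paper dispatches $XX$ in one line: an $XX$ error after a CNOT is equivalent to a single $X$ on the \emph{control} qubit before that CNOT (because $\mathrm{CNOT}\,(X\otimes I)_{c,t}=(X\otimes X)_{c,t}\,\mathrm{CNOT}$). For a $Z$-type triangle the control is a code qubit, so this reduces immediately to the single code-qubit $X$ case already handled; for an $X$-type triangle the control is $a^x$, and a lone $X$ on $a^x$ propagates (modulo $G(T)$) to at most one code-qubit $X$. No timing analysis is needed at all.

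Your forward-propagation treatment of the $Z$-type $XX$ case instead leaves you with one flipped outcome \emph{plus} one residual code-qubit $X$. You then claim the resulting flip pattern at plaquette $p$ forms ``a single interval\ldots joined at a common temporal endpoint.'' That is true---the flipped outcome sits at cycle $t_0$ while the residual $X$ is first seen by that same triangle only at cycle $t_0+1$, so the two pieces glue into one half-infinite interval---but you do not actually check this, and without it the argument does not exclude four defects. Note also that your $G(T)$ reduction only does real work in the $X$-type circuit; in the $Z$-type circuit the ancilla $X$ stays on $a^z$ and never reaches a code qubit, so there is nothing to reduce modulo $G(T)$ there.

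Two minor slips: a $Z$ on the control commutes with CNOT entirely, so a faulty $|+\rangle$ preparation of $a^x$ feeds \emph{no} error to code qubits (your conclusion of zero defects is still correct); and you should state explicitly that a wrong measurement outcome on $a^x$ likewise yields zero defects, since the paper's list of single faults includes outcome errors on either ancilla type.
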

\begin{proof}
A measurement error on the ancilla $a^z$ at a plaquette $p$ creates two defects at virtual vertices
$(p,t)$ and $(p,t+1)$ for some $t$. Measurement errors on ancillas $a^x$ create no defects since we
ignore $X$-type syndromes. Ancilla preparation error can be regarded as an $X$-type
error for ancillas $a^z$ and $Z$-type error for ancillas $a^x$. Such an error can be propagated
forward in time without feeding back to the code qubits because $a^z$ is always a target
qubit and $a^x$ is always a control qubit for any CNOT gate, see Fig.~\ref{fig:ancilla}.
 Hence ancilla preparation errors are equivalent to
the measurement errors. By the same reason, an $X$-error on $a^z$ caused by any CNOT
gate is equivalent to a measurement  error.

Consider now a single-qubit $X$ error on some code qubit.
If the syndrome were measured on all plaquettes directly after the error,
one would observe non-trivial syndromes $s^Z_p$, $s^Z_q$
at some pair of plaquettes $p,q$, see Section~\ref{sec:STC}.
Since there are no other errors in the readout circuit, it will faithfully simulate
the ideal syndrome measurements, that is, the syndrome $s^Z_p(t)$
will change from $1$ to $-1$ for some step $t_p$
and the  syndrome $s^Z_q(t)$ will change from $1$ to $-1$ for some step $t_q$.
This produces a pair of defects at virtual sites $(p,t_p)$ and $(q,t_q)$.
(More detailed analysis shows that either $t_p=t_q$ or $t_p=t_q\pm 1$).
An error $XX$ that occurred after a CNOT gate
is equivalent to a single $X$ error on the control qubit that occurred before this CNOT.
The  only remaining case is a single $X$ error on the ancilla $a^x$.
It can be propagated forward or backward towards the nearest
$a^x$-measurement. Such propagation feeds back at most one $X$ error
to code qubits. This is the case that we have already explored.
 \end{proof}
A trivial corollary of the lemma is that the number of defects on the virtual lattice
is always even.

We connect a pair of virtual vertices $u,v$ by a virtual edge
iff there a single fault in the readout circuit capable of creating a pair of defects at $u$ and $v$.
A more detailed analysis shows that the virtual lattice has seven types of edges
(not counting the orientation). The table below shows all $14$ neighbors $v$ of
some fixed virtual vertex $u=(x,y,t)$ as well as the total number of single faults
of each type in the readout circuit that create defects at $u$ and $v$.
For brevity we refer to single faults $IX$, $XI$, $XX$ as G-faults (gate faults),
while measurement and preparation single faults are referred to as
M-faults and P-faults.
\begin{center}
\begin{tabular}{|c|c|c|c|c|}
\hline
Neighbor of $(x,y,t)$ & G-faults & M-faults & P-faults & Prior \\
\hline
$(x,y,t\pm 1)$ & 6 & 2 & 2 & $11p/2$ \\
$(x\pm 1,y,t)$ & 8 & 0 & 0 &  $2p$ \\
$(x\pm 1, y\mp 1,t)$ & 8 & 0 & 0 & $2p$ \\
$(x,y\pm 1,t)$ & 4 & 0 & 0 & $p$ \\
$(x,y\pm 1, t\pm 1)$ & 2 & 0 & 0 & $p/2$\\
$(x\mp 1, y,t\pm 1)$ & 2 & 0 & 0 & $p/2$ \\
$(x\mp 1,y\pm 1,t\pm 1)$ & 2 & 0 & 0 & $p/2$\\
\hline
\end{tabular}
\end{center}
Note that the space-like virtual edges (those for which $u$ and $v$ have the same $t$
coordinate) correspond to edges of the 2D virtual lattice defined in Section~\ref{sec:ec2D}.
If one ignores the $t$ coordinate, space-like virtual edges correspond to the code qubits.
In particular, pairs of defects located on space-like virtual edges can be viewed as
memory errors. On the other hand, pairs of defects located on time-like virtual edges(those for which $u$ and $v$
have the same $x$,$y$ coordinates) can be viewed as syndrome measurement errors.
The remaining virtual edges represent various combinations of memory errors and measurement errors.

For every virtual edge $e$ we define a {\em prior}  $p_e$
as the probability to observe a pair of defects at the endpoints of $e$.
Taking into account that any single G-fault has probability $p/4$, while
a single M-fault and a single P-fault have probability $p$, one arrives at the
priors listed in the table.

We shall choose the correction operator $E^*$ by pretending that the
creation of defect pairs on different virtual edges are independent events.
Then the most likely combination of memory errors and measurement errors
consistent with the observed configuration of defects coincides with the
minimum weight matching of defects on the virtual lattice, where an edge $e$ is assigned a
weight $w_e\sim \log{(1/p_e)}$. The minimum weight matching $M$
can be found efficiently using the Edmonds's algorithm. Finally, we choose
the correction operator $E^*$ as the product of all memory errors that appear in $M$.
In order to decide whether the error correction is successful we compare $E^*$
with the accumulated error $E$ on the code qubits generated by the syndrome readout
circuit. The results of our Monte Carlo simulation are shown on Figs.~\ref{fig:threshold},\ref{fig:overhead}.
It indicates that the threshold error rate for the circuit-based error model is $p_c\approx 0.6\%$.

\begin{figure}[h]
\centerline{\includegraphics[height=6cm]{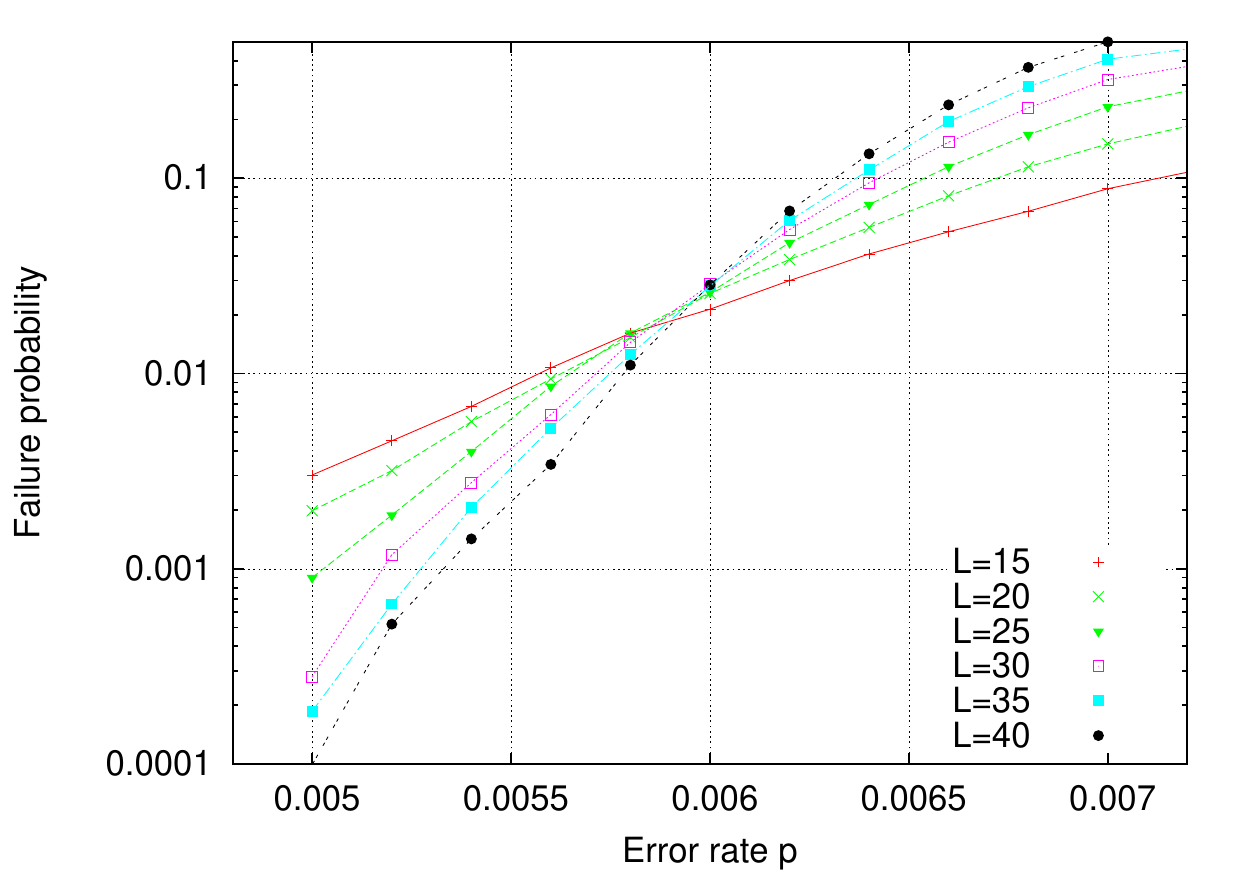}}
\caption{Probability of the error correction failure for the subsystem toric code
under the circuit-based error model.
For a lattice of size $L$  the syndrome measurement has been repeated $L$ times.
Each data point was obtained using $10^4-10^6$ Monte Carlo trials.
The simulation was performed only for $X$-type errors ($Z$-type errors
on the reflected lattice).}
\label{fig:threshold}
\end{figure}

\begin{figure}[h]
\centerline{\includegraphics[height=6cm]{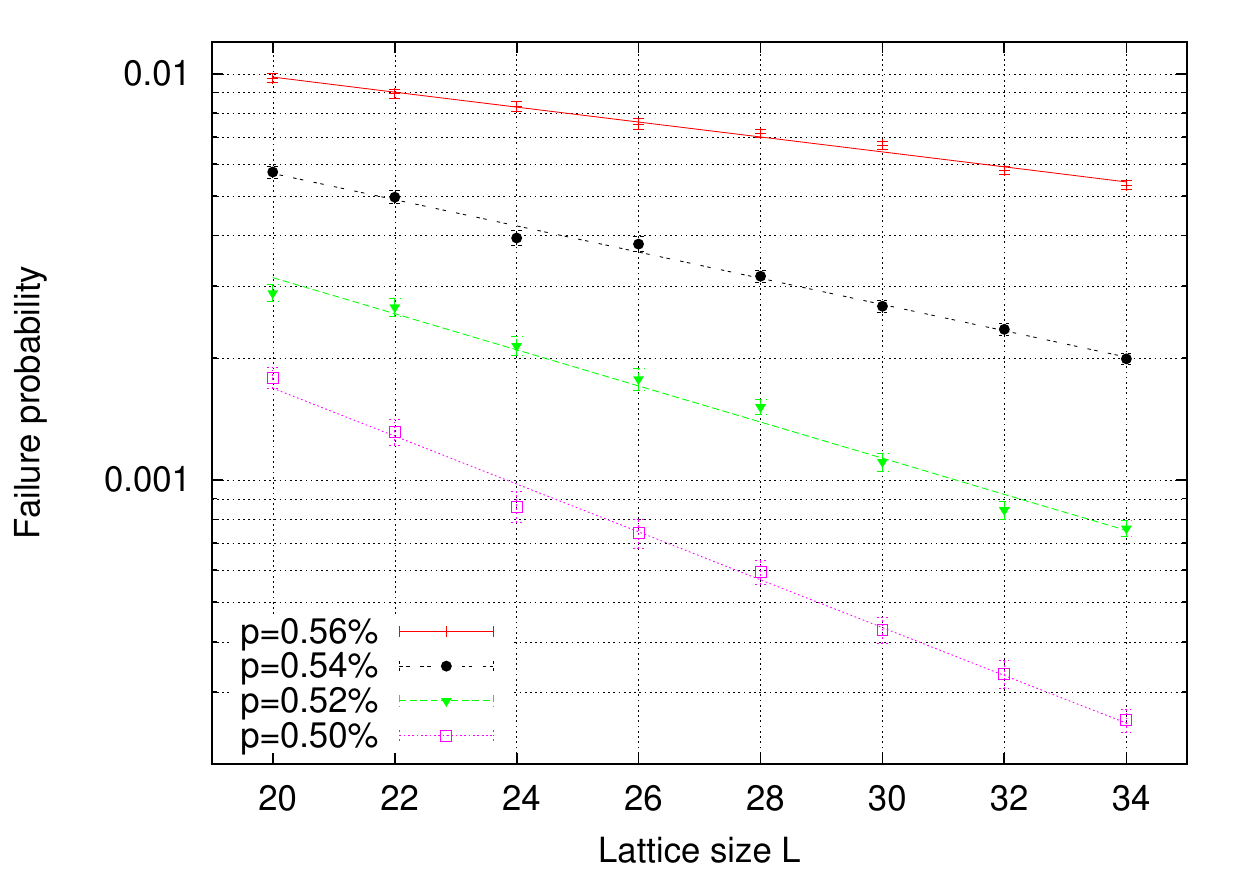}}
\caption{Scaling of the error correction failure probability for different error
rates below the threshold. Error bars represent the statistical error.
Achieving  the failure probability $10^{-6}$ at the error rate $p=0.5\%$
would require $L\approx 100$. }
\label{fig:overhead}
\end{figure}

\section{Direct $3$-qubit parity measurements}
\label{sec:toy}

In the previous section we estimated the threshold in a model where the triangle operators are measured by performing CNOT gates between the code qubits and an ancillary qubit. In this section, we will study a different model where measurements are performed directly. For some physical devices, it is possible to directly probe a multi-qubit parity operator without the need to use an ancillary qubit---the probe itself is used as a mediator that stores and accumulates the multi-qubit correlations. Direct two-qubit measurements have been realized in circuit quantum
electrodynamics~\cite{CDGN10a,FMLB09a,DCGB09a,RvKL12a}, and there are proposals \cite{Lalumiere2010} to turn these two-qubit measurements into parity measurements. These direct measurement schemes require a separate threshold analysis because they have different noise models and propagation.

We will focus on the recent proposal of DiVincenzo and Solgun~\cite{DiVincenzo12} that realizes a three-qubit parity measurement $ZZZ$ by capacitively coupling three Josephson-junction qubits to two transmission-line resonators. In the dispersive regime---where the difference between the resonant frequency of the transmission lines and the qubit transition frequency is much larger than the coupling strength---the transmission-line will pick-up a qubit-state-dependent frequency shift (Stark shift). When a near-resonant frequency probe signal is sent through one of the transmission lines, it picks up a phase that depends on the resonant frequency. Thus, the state-dependent resonant frequency shift will imprint a phase shift on the probe pulse that depends on the state of all three qubits.  With an appropriate choice of parameters (qubit-transmission line detuning, coupling strength, probe signal frequency), the probe signals sent through the two transmission lines can be measured interferometrically to reveal information only about the parity of the three qubits, all other information about the qubit state imprinted on the probe signals being erased by the interferometric measurement. One important advantage of this measurement scheme is that a single qubit can participate to two distinct parity measurements simultaneously with an appropriate arrangement of transmission lines.

The main source  of errors in this measurement is dephasing caused by the finite bandwidth of the probe pulse \cite{DiVincenzo12}. In an ideal parity measurement, the coherence between two computational basis states of the three-qubit systems $| x\ra$ and $|y\ra$ would be totally suppressed when $x$ and $y$ have different parities, and unaffected otherwise. The finite bandwidth of the probe pulse will cause dephasing between states of a given parity, and incomplete dephasing between states of distinct parity. Moreover, these errors are otherwise uniform, they do not depend, e.g., on the Hamming distance $|x-y|$ between the computational basis states. This is characteristic of a collective noise model, where multi-qubit errors are as likely as single qubit errors. In contrast, when qubits are subject to independent noise, dephasing would increase with Hamming distance between the states.

Parity measurements in the conjugate basis are required to measure the $X$-type triangle operators. These can be realized by rotating the qubits prior to sending the probe signal in the transmission line. Single qubit rotations are very fast and accurate in this architecture \cite{MGJR12a}. Nonetheless, they propagate errors, and can interchange $X$-type and $Z$-type errors. Based on these considerations, we will model noisy measurement of $X$-type and $Z$-type triangles in the following way:
\begin{itemize}
\item  A noisy $XXX$ or $ZZZ$ measurement is modeled by a perfect even/odd subspace projection, followed by
an error $(1-p) Id + p \calD$,
where $Id$ is the identity map and $\calD$ is the fully depolarizing three-qubit map
applying one of $64$ three-qubit Pauli operators with probability $1/64$ each.
\item The measurement outcome is flipped with probability p.
\end{itemize}
With this model, the syndrome extraction cycle requires only two rounds: one to measure all $X$-type triangles and one to measure all $Z$-type triangles. This implies that some qubits participate to two simultaneous measurements. As mentioned above, this is not a problem physically so far as a single qubit can be coupled to multiple transmission lines, which has already been demonstrated experimentally \cite{JRHS10a}. Moreover, two noisy parity measurements in the same basis always commute with our choice of noise model.

We have simulated fault-tolerant error correction of the subsystem toric code using direct parity measurement with the noise model described above, our results are presented on Fig.~\ref{fig:DVnoise}. Since errors are correlated in this model, we have opted for the renormalization group (RG) decoding algorithm proposed in \cite{Poulin09,DP10a1}.  Indeed, Edmonds's minimum weight matching algorithm assumes an independent noise model and consequently yields a lower threshold in the presence of correlations. The RG decoder can exactly account for some of these correlations (those that are contained within a RG unit cell). Additional correlations can be approximated by updating the error prior on each RG unit cell using a belief propagation decoder \cite{PC08a}. The RG decoding is executed using these updated error priors. Following the proposal of \cite{Bombin11}, we map the code and its (updated) noise model onto the standard toric code on which we execute RG~\footnote{This last step plays no fundamental role, it only saves us from programming a distinct RG decoder for each code: since the decoding problem of any 2D translationally-invariant stabilizer code can be mapped to that of decoding of the standard toric code~ \cite{Bombin11}, the same program can be used with different codes.}.  While Refs.~ \cite{Poulin09,DP10a1} assumed noiseless syndrome measurements, the decoding algorithm can easily be extended to noisy syndrome by renormalizing a 3D lattice \cite{DP12a}.

\begin{figure}[h]
\centerline{\includegraphics[height=6cm]{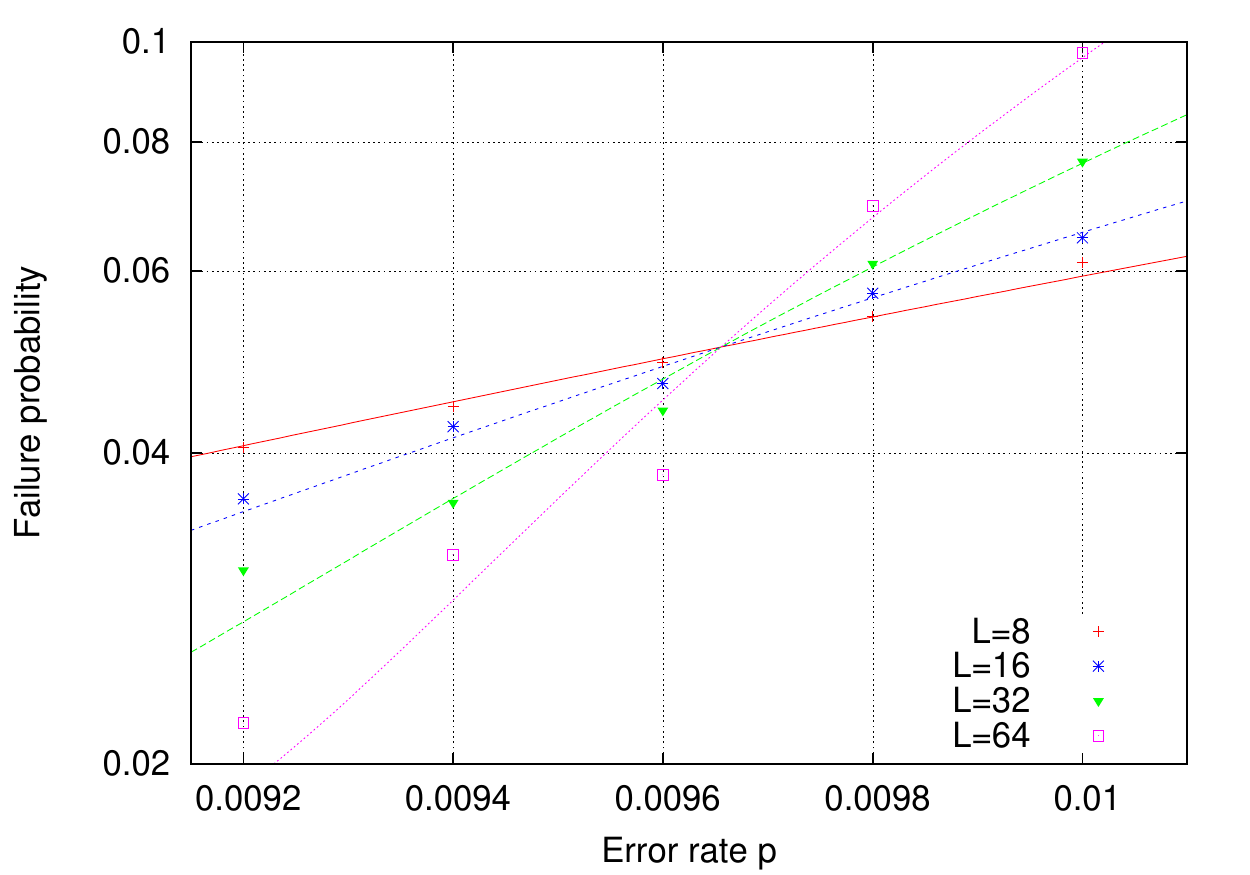}}
\caption{Decoding error probability as a function of the physical error rate $p$ for various lattice sizes $L$. The lines are obtained using the fitting functions $a + b (p - p_{th}) L^{1/\nu} + c [(p - p_{th}) L^{1/\nu}]^2$, which produces a threshold $p_{th} \approx 0.968 \%$ and a critical exponent $\nu \approx 1.36$.}
\label{fig:DVnoise}
\end{figure}

The results shown on Fig.~\ref{fig:DVnoise} indicate a threshold value of roughly 0.97\%. This value should be seen as  lower bound to the true threshold of this code, which may well be above 1\%. Indeed, the results presented here were obtained from a unit cell of dimension $2\times 2 \times 1$: two of the three space-time dimensions are renormalized at each iteration, and by rotating the unit cell at each iteration we obtain a renormalization of the entire space-time lattice by a factor 4 after 3 RG rounds.  Based on our experiments, larger RG unit cells produce higher thresholds because they make use of more correlations existing in the noise model. The decoding complexity scales exponentially with the unit cell size however, so in practice we are limited to relatively small cells.
Furthermore, while the simulated syndrome extraction protocol involved measurements of both X-type and Z-type triangles, error correction has been performed only for X-type errors. Applying the same error correction algorithm independently to X-type and Z-type errors would result in the same error threshold~\footnote{Since the subsystem toric code and the error model are symmetric under exchange of Pauli $X$ and $Z$, the error threshold must be the same
for the models with only $X$ and only $Z$  errors. Correcting each type of errors independently for the full error model can increase the decoding failure probability at most by a factor of two
compared with the case of $X$ errors only (use the union bound).}. We expect that more sophisticated decoders
taking into account correlations between $X$ and $Z$ errors could achieve
higher error thresholds.

Finally, we note that direct parity measurements of weight-four operators such as $ZZZZ$ and $XXXX$ can be realized similarly \cite{DiVincenzo12}. This could be used to implement Kitaev's toric code. However, it is not reasonable to assume that the noise rate $p$ is independent of the weight $w$ of the operator being measured. Thus, one needs to work out this dependence $p(w)$ from physical considerations before comparing thresholds of different codes obtained from direct parity measurements.

\section{Conclusion}

We have presented a subsystem version of Kitaev's surface code. The main features of our code is that it requires only 3-qubit parity measurements and its stabilizer generators have weight 6. Minimizing the weight of the parity measurements is helpful as it simplifies the measurement procedure, while minimizing the weight of the stabilizer generators is also desirable since it makes syndromes more reliable. In contrast to our code, the standard toric code requires weight 4 parity measurements and has weight 4 stabilizer generators. The subsystem color codes require only weight 2 parity measurements, but 
have stabilizer generators of weight up to 18. Thus, based only on these considerations, it is not clear how the threshold of these various codes should compare.

Our numerics show that in the circuit based model, our code has a threshold (0.6\%)
which is almost an order of magnitude larger than the one of the color code (0.08\%) \cite{LAR11a}, and a bit more than half that of the standard toric
code (0.9\%) \cite{WFH:highthresh}. Motivated by the recent work of DiVincenzo and Solgun~\cite{DiVincenzo12}, we have also considered a setting where parity measurements can be implemented directly and found a threshold of 0.97\%. This value cannot be compared directly to the thresholds reported above since it is based on a substantially different noise model, and additional physical considerations must be taken into account before comparing.

We have shown that the new subsystem toric code gives rise to an exactly
solvable spin Hamiltonian with $3$-qubit interactions and
topologically ordered ground state which is
locally equivalent to the standard toric code.

\section{Acknowledgements}

We would like to  thank David DiVincenzo for helpful discussions
and for drawing our attention to Ref.~\cite{DiVincenzo12}.
This work was supported by Intelligence Advanced Research Projects Activity (IARPA) via Department of Interior National Business Center contract D11PC20167.
S.B. was partially supported by DARPA QUEST program under contract number HR0011-09-C-0047.
The U.S. Government is authorized to reproduce and distribute reprints for Governmental purposes notwithstanding any copyright annotation thereon. Computational resources were provided by Calcul Qu\'ebec
 and by IBM Blue Gene Watson supercomputing center.
 Disclaimer: The views and conclusions contained herein are those of the authors and should not be interpreted as necessarily representing the official policies or endorsements, either expressed or implied, of IARPA, DoI/NBC, or the U.S. Government.


\end{document}